\renewcommand{\v}[1]{\bm{#1}}
\newcommand{\E}{\mathbb{E}}
\newcommand{\Q}{\mathbb{Q}}
\newcommand{\eq}[1]{equation (\ref{eq:#1})}
\newcommand{\fig}[1]{figure \ref{fig:#1}}
\newcommand{\Fig}[1]{Figure \ref{fig:#1}}
\DeclareMathOperator{\diag}{diag}
\DeclareMathOperator{\Tr}{Tr}
\newcommand{\norm}[1]{\left\lVert#1\right\rVert_{1}}
\renewcommand{\d}[1]{\,d#1}
\newtheorem{assumption}{Assumption}\newtheorem{theorem}{Theorem}\newtheorem{corollary}{Corollary}\newtheorem{proposition}{Proposition}\newtheorem{lemma}{Lemma}
\newcommand{\Exq}{\mathbb{E}^Q} 
\newcommand{\Rss}{\Xi_{ss}}
\newcommand{\Rsd}{\Xi_{sd}}
\newcommand{\Rds}{\Xi_{ds}}
\newcommand{\Rdd}{\Xi_{dd}}
\newcommand{\pss}{\pi_{ss}}
\newcommand{\psd}{\pi_{sd}}
\newcommand{\pds}{\pi_{ds}}
\newcommand{\Matrix}[2]{ \left( \begin{array}{#1} #2 \end{array} \right) }
\newcommand{\Ind}{\mathbbm{1}}
\begin{document}
\title{Cross-ownership as a structural explanation for rising correlations
in crisis times}
\author{Nils Bertschinger}
\author[2]{Axel A. Araneda}
\affil[1]{Frankfurt Institute for Advanced Studies; D-60438 Frankfurt am Main, Germany.}
\affil[2]{Institute of Financial Complex Systems, Faculty of Economics and Administration, Masaryk University; 602 00 Brno, Czech Republic.}

\maketitle

\begin{abstract}
In this paper, we examine the interlinkages among firms through a
financial network where cross-holdings on both equity and debt are
allowed. We relate mathematically the correlation among equities with
the unconditional correlation of the assets, the values of their business
assets and the sensitivity of the network, particularly the $\Delta$-Greek.
We noticed also this relation is independent of the Equities level.
Besides, for the two-firms case, we analytically demonstrate that
the equities correlation is always higher than the correlation of
the assets; showing this issue by numerical illustrations. Finally,
we study the relation between equity correlations and asset prices,
where the model arrives to an increase in the former due to a fall
in the assets.
\end{abstract}

\section{Introduction}

The famous \citeA{merton1974pricing} model relates debt and equity of
a firm with European put and call options respectively. Since then it
has been developed into an industry standard for structural credit
risk modeling and management.  However, the increasingly complex
interlinkages between financial institutions are at odds with an
individual and separate valuation of risk (see \citeA{de2000systemic}
for an early survey of systemic risk). Especially, the latest
financial crisis has painfully revealed the danger of contagion
throughout the financial system and spurred a wealth of interest in
theoretical models of systemic risk. In this regard, the model of
\citeA{eisenberg2001systemic} and its ``clearing payment vector''
insight have been identified as the seminal contribution in the field,
forming the basis for numerous studies of financial contagion arising
from cross-ownership of debt \cite{cifuentes2005liquidity,gai2010contagion,elliott2014financial}.
See \citeA{caccioli2018network} and \citeA{sasidevan2019systemic}
for recent surveys of the network approach into systemic risk.

An interesting and alternative viewpoint is provided by
\citeA{suzuki2002valuing}. While the model can be interpreted as an
extension of the \citeA{eisenberg2001systemic} model, it is better
seen as an extension of the Merton model allowing for multiple firms
with cross-ownerwhip of debt as well as equity. Thereby, considering
financial contagion as a problem of firm valuation where debt and
equity have to be assessed in a self-consistent fashion, e.g.~solving
a fixed point via Picard iteration 	\cite{Hain2015}. Furthermore, Suzuki explicitly solves the
valuation problem in case of two financial institutions	,
conditional on the values of the business assets where banks are
solvent or in default (Suzuki areas). For three or more banks a formal solution can still be written down, but requires a case distinction between exponentially many solvency
regions and can no longer be visualized in two dimensions.  Further developments on the
\citeauthor{suzuki2002valuing} model extend it to debts of multiple
seniorities \cite{fischer2014no}, address the (joint) default
probabilities under this model \cite{karl2014cross} or compute
analytic bounds assuming comonotonic asset endowments
\cite{feinstein2021}.

On the empirical side, it is an established ``stylized fact'' that
correlations rise during bear markets and crises times
\cite{longin2001extreme,ang2002asymmetric,kalkbrener2015correlation}.
Indeed, \citeA{Baig1999} found that during the Asian crisis,
correlations in stock markets, interest rates, exchange rates and
sovereign spreads rose significantly as compared to tranquil
times. \citeA{onnela2003dynamics} investigated the empirical
distribution of pairwise stock correlation coefficients of stocks
traded at NYSE, estimating correlations in a rolling window fashion, finding a substantial increase of their mean value around the
Black Monday of Oct 1987.  \citeA{preis2012quantifying} analyzed, with
the time-varying correlations among the 30 stocks composing the DJIA
index and demonstrate a linear relationship with market stress.
Instead, \citeA{adams2017correlations} argue, based on econometric
consideration, that correlations change in a step-like fashion due to
particular financial events (structural breaks). Similar arguments are
put forward line in more recent works of \citeA{choi2019self} and
\citeA{demetrescu2019testing}.

In terms of modeling, the works of
\citeA{cizeau2001correlation,lillo2000symmetry,kyle2001contagion} have
addressed the correlation issue from a structural perspective. Another
interesting approach in this line is provided by
\citeA{cont2013running} who evaluate how fire sales lead to
\emph{endogoneous} correlations in a simple multi-period
model,. However, to our knowledge, it has not been approached in the
context of cross-holding networks. Here, we show that network models
readily explain the correlation stylized fact. In particular, we
depart from a simply specification of the
\citeauthor{suzuki2002valuing} model, and proof that it exhibits
structural changes in the correlation between firm equity values
depending on solvency conditions. Furthermore, we seek to understand
and quantify the precise influence of different model parameters on
the observed correlation structure.

The remainder of the paper is structured as follows: First, we lay out
the general network valuation model following
\citeA{suzuki2002valuing}. Then, we interpret
values of firm equity and debt as derivative contracts, i.e.~extending
the \citeauthor{merton1974pricing} model to multiple firms, and link
the correlation between these derivatives with the $\Delta$
sensitivities, the asset prices, and the leverage. In turn,
concentrating on the two-firms case, we proof that the correlation
among the two firm equities never falls below their unconditional
asset correlation. Finally, we illustrate our results through
numerical simulations, showing in particular that correlations tend to
rise when equity prices drop. Thereby, providing a novel structural
explanation of this well-known stylized fact.

\section{Model}

\label{sec:Model}

\subsection{Notation and mathematical preliminaries}

Here we quickly summarize the mathematical notation employed in this
paper. We write vectors $\v{x},\v{y}\in\mathbb{R}^{n}$ with bold
lower case and matrices $\v{A},\v{B}\in\mathbb{R}^{m\times n}$ with
bold upper case letters. Individual entries of vectors and matrices
are written as $x_{i},A_{ij}$. $\diag(\v{x})$ denotes the $n\times n$
diagonal matrix $\v{D}$ with entries $D_{ii}=x_{i}$ along its diagonal.
The transpose of a matrix is denoted as $\v{A}^{T}$. All products
containing vectors and matrices are understood as standard matrix
products, e.g. $\v{A}\v{B}$ denotes the matrix product of $\v{A}$
and $\v{B}$, $\v{x}\v{x}$ is undefined whereas $\v{x}^{T}\v{x}$
is the scalar product of $\v{x}$ with itself. Row- and column-wise
stacking of vectors or matrices is denoted by $(\v{x};\v{y})$ and
$(\v{x},\v{y})$ respectively, i.e. $(\v{x};\v{y})$ is a $2n$-dimensional
vector whereas $(\v{x},\v{y})$ is a $n\times2$ matrix.

Random variables $X,Y$ are written as upper case letters with
individual outcomes $x,y$ denoted in lower case. Expectations are
denoted as $\E[f(X)]$ and understood with respect to the (joint)
distribution of random variables within the brackets. Sometimes we use
$\E_{t}^{\Q}$ to denote that the expectation is taken over the
risk-neutral measure $\Q$, implicitly conditioned on the information
filtration $\mathcal{F}_t$ up to time $t$.

\subsection{Network valuation}

\citeA{merton1974pricing} has shown that equity and firm debt can
be considered as call and put options on the firm's value respectively.
In this model, a single firm is holding externally priced assets $a$
and zero-coupon debt with nominal amount $d$ due at a single, fixed
maturity $T$. Then, at time $T$ the value of equity $s$ and the
recovery value of debt $r$ are given as 
\begin{align}
s & =\max\{0,a-d\}=(a-d)^{+},\label{eq:Merton}\\
r & =\min\{d,a\}=d-(d-a)^{+}
\end{align}
corresponding to an implicit call and put option respectively.

\citeA{suzuki2002valuing} and others \cite{elsinger2009financial,fischer2014no}
have since generalized this model to multiple firms with equity and
debt cross-holdings. In this paper we consider $n$ firms. Each firm
$i=1,\ldots,n$ holds an external asset $a_{i}>0$ as well as a fraction
$M_{ij}^{s}$ of firm $j$'s equity and debt $M_{ij}^{d}$. Here,
the investment fractions $M_{ij}^{s}$ and $M_{ij}^{d}$ are bounded
between $0$ and $1$, i.e. $0\leq M_{ij}^{s,d}\leq1$, and the actual
value invested in the equity of counterparty $j$ is given as $M_{ij}^{s}s_{j}$.
In the following we require: \begin{assumption} \label{assu_1} There
are no self-holdings, i.e. $M_{ii}^{s}=M_{ii}^{d}=0$ for all $i=1,\ldots,n$,
nor short positions, i.e. $M_{ij}^{s},M_{ij}^{d}\geq0$ for all $i,j=1,\ldots,n$.
Moreover, we require that the total fractions equity and debt held
by any counterparty cannot exceed unity. In addition, we assume that
some of each firms equity and debt are held externally, i.e. for all
$j=1,\ldots,n$ it holds that 
\begin{align}
\sum_{i}M_{ij}^{s}<1\quad\mbox{and}\quad\sum_{i}M_{ij}^{d}<1\,.\label{eq:assu_1a}
\end{align}
\end{assumption} That is, $\v{M}^{s}$ and $\v{M}^{d}$ are strictly
(left) sub-stochastic matrices. Alternatively, we can express this
as $\norm{\v{M}^{d}},\norm{\v{M}^{s}}<1$.

Now, the value of all assets $v_{i}$ held by firm $i$ is given by
\begin{align}
v_{i} & =a_{i}+\sum_{j=1}^{n}M_{ij}^{s}s_{j}+\sum_{j=1}^{n}M_{ij}^{d}r_{j}\,.\label{eq:XOS_val_i}
\end{align}
Correspondingly, the firm's equity and recovery value of debt are
given by 
\begin{align}
s_{i} & =\max\left\{ 0,a_{i}+\sum_{j}M_{ij}^{s}s_{j}+\sum_{j}M_{ij}^{d}r_{j}-d_{i}\right\} ,\label{eq:XOS_sr_i}\\
r_{i} & =\min\left\{ d_{i},a_{i}+\sum_{j}M_{ij}^{s}s_{j}+M_{ij}^{d}r_{j}\right\} \,.\label{eq:XOS_sr_i_2}
\end{align}
In matrix notation, i.e. collecting equity and debt values into vectors
$\v{s}=(s_{1},\ldots,s_{n})^{T}$ and $\v{r}=(r_{1},\ldots,r_{n})^{T}$
respectively, this can be rewritten as 
\begin{align}
\v{s} & =\max\left\{ \v{0},\v{a}+\v{M}^{s}\v{s}+\v{M}^{d}\v{r}-\v{d}\right\} ,\label{eq:XOS_vec}\\
\v{r} & =\min\left\{ \v{d},\v{a}+\v{M}^{s}\v{s}+\v{M}^{d}\v{r}\right\} 
\end{align}
Thus, the firms' equity and debt values are endogenously defined as
the solution of a fixed point. This is readily seen when collecting
equity and debt row-wise into a single vector $\v{x}=(\v{s};\v{r})$,
i.e. $\v{s}=\v{x}_{1:n}$ and $\v{r}=\v{x}_{(n+1):2n}$, and writing
\begin{align}
\v{x}=\v{g}(\v{a},\v{x})\label{eq:XOS_fix}
\end{align}
with the vector valued function $\v{g}=(g_{1}^{s},\ldots,g_{n}^{s},g_{1}^{r},\ldots,g_{n}^{r})^{T}$
where for $i=1,\ldots,n$ 
\begin{align}
g_{i}^{s}(\v{a},\v{x}) & =\max\left\{ 0,a_{i}+\sum_{j}M_{ij}^{s}x_{j}+\sum_{j}M_{ij}^{d}x_{n+j}-d_{i}\right\} ,\label{eq:XOS_g}\\
g_{i}^{r}(\v{a},\v{x}) & =\min\left\{ d_{i},a_{i}+\sum_{j}M_{ij}^{s}x_{j}+\sum_{j}M_{ij}^{d}x_{n+j}\right\} \,.\\
\end{align}
Each of the functions $g_{i}^{s}$ and $g_{i}^{r}$ is continuous
and increasing in $\v{a}$ and $\v{x}$. Together with assumption
\ref{assu_1} it follows that the fixed point of (\ref{eq:XOS_fix})
is positive and unique. \begin{theorem} Suppose that assumption \ref{assu_1}
holds. Then, for each value of external assets $\v{a}>\v{0}$ there
is a positive and unique $\v{x}$ solving (\ref{eq:XOS_fix}). \end{theorem}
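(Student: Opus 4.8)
The plan is to show that, for each fixed $\v{a}>\v{0}$, the map $\v{x}\mapsto\v{g}(\v{a},\v{x})$ is a contraction on $\mathbb{R}^{2n}$ in the $\ell_1$ norm $\norm{\cdot}$, so that existence and uniqueness follow immediately from the Banach fixed point theorem, and then to obtain positivity by checking that the nonnegative orthant is invariant under $\v{g}$. First I would isolate the common linear argument shared by the two components of each firm, $w_i(\v{x})=a_i+\sum_j M_{ij}^s x_j+\sum_j M_{ij}^d x_{n+j}$, so that $g_i^s(\v{a},\v{x})=(w_i-d_i)^+$ and $g_i^r(\v{a},\v{x})=\min\{d_i,w_i\}$ (I suppress the fixed argument $\v{a}$ below).

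The step that requires care, and that I expect to be the only genuine obstacle, is avoiding a spurious factor of $2$. Bounding $g_i^s$ and $g_i^r$ separately by the $1$-Lipschitz property of $\max$ and $\min$ and then summing would double-count the dependence on $w_i$ and yield a contraction modulus of $2\max\{\norm{\v{M}^s},\norm{\v{M}^d}\}$, which is useless. The remedy is the complementarity identity $(w-d_i)^+ + \min\{d_i,w\}=w$, valid for every $w\in\mathbb{R}$: since both $(w-d_i)^+$ and $\min\{d_i,w\}$ are non-decreasing in $w$ and their sum is exactly $w$, the two increments have the same sign and add up to the increment of $w$, giving the sharp per-firm bound
\[
|g_i^s(\v{x})-g_i^s(\v{x}')| + |g_i^r(\v{x})-g_i^r(\v{x}')| \le |w_i(\v{x})-w_i(\v{x}')| .
\]
In words, the equity and debt of a single firm react to a change in $w_i$ in a complementary fashion, so their joint sensitivity is bounded by $1$ rather than $2$.

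Next I would chain this estimate with the triangle inequality (using $\v{M}^{s},\v{M}^{d}\ge\v{0}$) and reorganize the resulting double sum by columns:
\[
\norm{\v{g}(\v{a},\v{x})-\v{g}(\v{a},\v{x}')} \le \sum_i |w_i(\v{x})-w_i(\v{x}')| \le \sum_j \Big(\sum_i M_{ij}^s\Big)|s_j-s_j'| + \sum_j \Big(\sum_i M_{ij}^d\Big)|r_j-r_j'| .
\]
By Assumption~\ref{assu_1} every column sum of $\v{M}^s$ and of $\v{M}^d$ is strictly below $1$, i.e. $\norm{\v{M}^s},\norm{\v{M}^d}<1$; setting $c=\max\{\norm{\v{M}^s},\norm{\v{M}^d}\}<1$ the right-hand side is at most $c\,\norm{\v{x}-\v{x}'}$. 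Hence $\v{g}(\v{a},\cdot)$ is a contraction with modulus $c<1$, and the Banach fixed point theorem yields a unique $\v{x}\in\mathbb{R}^{2n}$ solving \noeq{XOS_fix}.

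Finally, for positivity I would observe that whenever $\v{x}\ge\v{0}$ one has $w_i(\v{x})\ge a_i>0$, so that $g_i^s(\v{a},\v{x})\ge 0$ and $g_i^r(\v{a},\v{x})=\min\{d_i,w_i\}\ge\min\{d_i,a_i\}>0$; thus $\v{g}(\v{a},\cdot)$ maps the closed (hence complete) nonnegative orthant into itself, and the unique fixed point must lie there. In fact this last estimate shows the debt recoveries are strictly positive, while the equities are nonnegative and vanish exactly for the firms in default. Apart from the complementarity argument that removes the factor of $2$, the remaining steps are routine once the $\ell_1$ norm is chosen to match the column-substochastic structure imposed by Assumption~\ref{assu_1}.
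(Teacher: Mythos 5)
Your proof is correct, but it takes a genuinely different route from the paper. The paper does not argue from scratch: it observes that the model is a special case of the multi-seniority framework of Fischer (2014) (with $k=1$ and constant nominal debt), verifies that Assumption \ref{assu_1} implies Fischer's assumption 3.1 while his assumptions 3.6 and 3.7 hold trivially, and then invokes his Theorem 3.8(iv). You instead give a self-contained Banach contraction argument, and your key step --- the complementarity identity $(w-d_i)^+ + \min\{d_i,w\} = w$ together with monotonicity of both pieces, which yields the per-firm bound $|g_i^s(\v{x})-g_i^s(\v{x}')|+|g_i^r(\v{x})-g_i^r(\v{x}')|\le|w_i(\v{x})-w_i(\v{x}')|$ and kills the spurious factor of $2$ --- is exactly the right device. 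Notably, your contraction modulus $c=\max\{\norm{\v{M}^s},\norm{\v{M}^d}\}$ coincides with the quantity $\max_{\v{\xi}}\norm{\v{K}_{\v{\xi}}}$ appearing in the paper's appendix lemma on Lipschitz continuity of $\v{x}^*$ (the column sums of $\v{K}_{\v{\xi}}$ are independent of the solvency vector $\v{\xi}$), so your argument in effect proves the contraction estimate that the paper only establishes later, and as a byproduct it justifies the geometric convergence of the Picard iteration mentioned in the introduction --- something the citation-based proof does not deliver. What the paper's route buys is brevity and inheritance of generality (seniority structures, more general recovery rules); what yours buys is transparency, a quantitative rate, and independence from an external reference. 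Two minor points: your strict positivity $r_i\ge\min\{d_i,a_i\}>0$ tacitly assumes $d_i>0$, which the paper never states explicitly but which is clearly intended (and needed for the theorem's ``positive'' to be meaningful for debt); and, as you correctly note, equities of defaulting firms vanish, so ``positive'' in the statement can only mean nonnegativity of $\v{x}$ --- your orthant-invariance argument establishes precisely that, matching what Fischer's theorem provides.
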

\begin{proof} Our model is a special case of the one considered by
\cite{fischer2014no} with $k=1$ and $\v{d}_{\v{r}^{1},\v{r}^{0}}^{1}\equiv\v{d}$.
Furthermore, Fischer's assumption 3.1 holds by assumption \ref{assu_1}
and assumptions 3.6 and 3.7 are trivial as our nominal debt vector
$\v{d}$ is constant. The result then follows by his theorem 3.8 (iv).
\end{proof}

\section{Risk-neutral valuation}

The celebrated Merton model exploits the connection of Eq. \ref{eq:XOS_fix}
with option prices to obtain the ex-ante market prices at time $t<T$
as 
\begin{align}
s_{t} & =\E_{t}^{Q}[e^{-r\tau}S_{T}]=\E_{t}^{Q}[e^{-r\tau}(A_{T}-d)^{+}] &r_{t} =\E_{t}^{Q}[e^{-r\tau}R_{T}]=\E_{t}^{Q}[e^{-r\tau}(d-(d-A_{T})^{+})]\label{eq:MertonQ}
\end{align}
respectively. Furthermore, assuming a geometric Brownian motion for
the price of the external assets, i.e. 
\begin{align}
dA_{t} & =rA_{t}\d{t}+\sigma_{a}A_{t}\d{W_{t}^{Q}}\label{eq:GB1D}
\end{align}
the corresponding stochastic differential equation for $s_{t}$ can
be obtained via Ito's lemma as 
\begin{align}
dS_{t} & =\left(\frac{\partial s_{t}}{\partial t}+\frac{\partial s_{t}}{\partial a_{t}}rA_{t}+\frac{1}{2}\frac{\partial^{2}s_{t}}{\partial a_{t}^{2}}\sigma_{a}^{2}A_{t}^{2}\right)\d{t}+\frac{\partial s_{t}}{\partial a_{t}}\sigma_{a}A_{t}\d{W_{t}^{Q}}\;.\label{eq:ItoVola}
\end{align}
Matching the volatility with $\sigma_{s}S_{t}$ one obtains the well
known relation 
\begin{align}
\sigma_{s} & =\frac{\partial s_{t}}{\partial a_{t}}\sigma_{a}\frac{a_{t}}{s_{t}}=\sigma_{a}\Delta\lambda\label{eq:MertonVola}
\end{align}
between equity and asset volatility. Here, $\Delta=\frac{\partial s_{t}}{\partial a_{t}}$
is the option Delta and $\lambda=\frac{A_{t}}{S_{t}}$ its leverage.

\subsection{Network valuation}

Denoting the unique solution of \eq{XOS_fix} by $\v{x}^{*}(\v{a})$,
we can consider the corresponding value of equity and debt claims
as derivative contracts on the underlying $\v{a}$. Accordingly, the
ex-ante market price at time $t<T$ is given as 
\begin{align}
\v{x}_{t}=\E_{t}^{Q}[e^{-r\tau}\v{x}^{*}(\v{A}_{T})]\label{eq:val_Q}
\end{align}
with the risk-less interest rate $r$ and time to maturity $\tau=T-t$.
The expectation is taken with respect to the risk-neutral measure
$Q$ of external asset values $\v{a}$ at maturity $T$. In the following,
we assume that the risk-neutral asset values follow a multi-variate
geometric Brownian motion, i.e. 
\begin{align}
d\v{A}_{t} & =r\v{A}_{t}\d{t}+\diag(\v{\sigma})\diag(\v{A}_{t})\d{\v{W}_{t}^{Q}}\label{eq:SDE}
\end{align}
with possibly correlated Wiener processes $\v{W}_{t}^{Q}$, i.e. $\E[\d{W_{i,t}^{Q}}\d{W_{j,t}^{Q}}]=\rho_{ij}\d{t}$
with $\rho_{ii}=1$.

The well-known solution of \eq{SDE} is given by 
\begin{align}
\v{A}_{t} & =\v{a}_{0}e^{\left(r-\frac{1}{2}\diag(\v{\sigma}^{2})\right)t+\diag(\v{\sigma})\v{W}_{t}}\label{eq:p_A_t}
\end{align}
where $\v{a}_{0}>0$ denotes the initial value and $\v{W}_{t}$ is
multivariate normal distributed with mean $\v{0}$ and covariance
matrix $t\v{C}$ with entries $C_{ij}=\rho_{ij}$.

As before, via the multi-variate Ito formula we obtain 
\begin{align}
dX_{i,t} & =\left(\frac{\partial x_{i,t}}{\partial t}+r\left(\frac{\partial x_{i,t}}{\partial\v{a}_{t}}\right)^{T}\v{A}_{t}+\frac{1}{2}\Tr\left(\diag(\v{A}_{t})^{T}\diag(\v{\sigma})^{T}\frac{\partial^{2}x_{i,t}}{\partial\v{a}_{t}\partial\v{a}_{t}}\diag(\v{\sigma})\diag(\v{A}_{t})\right)\right)\d{t}\nonumber \\
 & +\left(\frac{\partial x_{i,t}}{\partial\v{a}_{t}}\right)^{T}\diag(\v{\sigma})\diag(\v{A}_{t})\d{\v{W}_{t}^{Q}}\;.
\end{align}
Then, again matching the volatility with $\v{\sigma}_{i,x}x_{i,t}$
we compute the equity and debt volatilities 
\begin{align}
\v{\sigma}_{i,x} & =\frac{1}{x_{i,t}}\left(\frac{\partial x_{i,t}}{\partial\v{a}_{t}}\right)^{T}\diag(\v{\sigma})\diag(\v{a}_{t})\label{eq:VolaNet}
\end{align}
or collecting all terms into a volatility matrix 
\begin{align}
\v{L}_{x} & =\diag(\v{x}_{t})^{-1}\left(\frac{\partial\v{x}_{t}}{\partial\v{a}_{t}}\right)^{T}\diag(\v{\sigma})\diag(\v{a}_{t})\;.\label{eq:VolaNetMatrix}
\end{align}
Note that the instantaneous covariance matrix of $\v{x}_{t}$ at time
$t$ is then given by 
\begin{align}
\v{\Sigma}_{x} & =\v{L}_{x}t\v{C}\v{L}_{x}^{T}\label{eq:SigmaNet}\\
 & =\diag(\v{x}_{t})^{-1}\left(\frac{\partial\v{x}_{t}}{\partial\v{a}_{t}}\right)^{T}\diag(\v{\sigma})\diag(\v{a}_{t})t\v{C}\diag(\v{a}_{t})\diag(\v{\sigma})\left(\frac{\partial\v{x}_{t}}{\partial\v{a}_{t}}\right)\diag(\v{x}_{t})^{-1}\\
 & =\diag(\v{x}_{t})^{-1}\left(\frac{\partial\v{x}_{t}}{\partial\v{a}_{t}}\right)^{T}diag(\v{a}_{t})\v{\Sigma}_{a}\diag(\v{a}_{t})\left(\frac{\partial\v{x}_{t}}{\partial\v{a}_{t}}\right)\diag(\v{x}_{t})^{-1}
\end{align}
where $\v{\Sigma}_{a}=\diag(\v{\sigma})t\v{C}\diag(\v{\sigma})^{T}$
denotes the instantaneous asset covariance at time $t$. This generalizes
\eq{MertonVola} with the Delta matrix $\v{\Delta}_{x}=\frac{\partial\v{x}_{t}}{\partial\v{a}_{t}}$
and $\diag(\v{x}_{t})^{-1},\diag(\v{a}_{t})$ acting as leverage.
In contrast, to the uni-variate case these cannot be collected into
a leverage matrix as $\diag(\v{x}_{t})^{-1}$ is multiplied from the
left, i.e. acts on the rows, whereas $\diag(\v{a}_{t})$ is multiplied
from the right, acts on the columns.

\section{Two bank case}

For two banks, i.e. $i=1,2$, the fixed point equations \ref{eq:XOS_g}
for equity and debt can be solved explicitly. In particular, \citeA{suzuki2002valuing}
has shown that the value of the equity and debt, at maturity, depends
on the solvency conditions of the firms. Here, firm $i$ is solvent
(insolvent) if its total value \ref{eq:XOS_val_i} exceeds (falls
short of) its nominal debt, i.e. $v_{i,T}\geq(<)d_{i}$. Following
Suzuki, we define four regions (Suzuki areas) which consider the combinatory
of solvency or default condition at maturity \citeA{suzuki2002valuing,karl2014cross}:

\[
\Rss=\left\{ \left(a_{1,T},a_{2,T}\right)\in\mathbb{R}_{+}^{2}:v_{1,T}\geq d_{1}\land v_{2,T}\geq d_{2}\right\} 
\]

\[
\Rsd=\left\{ \left(a_{1,T},a_{2,T}\right)\in\mathbb{R}_{+}^{2}:v_{1,T}\geq d_{1}\land v_{2,T}<d_{2}\right\} 
\]

\[
\Rds=\left\{ \left(a_{1,T},a_{2,T}\right)\in\mathbb{R}_{+}^{2}:v_{1,T}<d_{1}\land v_{2,T}\geq d_{2}\right\} 
\]

\[
\Rdd=\left\{ \left(a_{1,T},a_{2,T}\right)\in\mathbb{R}_{+}^{2}:v_{1,T}<d_{1}\land v_{2,T}<d_{2}\right\} 
\]
\\

After that, and using simply assumptions \citeA{suzuki2002valuing},
we have a fix-point-solution for the system \ref{eq:XOS_sr_i}-\ref{eq:XOS_sr_i_2}
conditional to each Suzuki area, given by:

\begin{equation}
s_{1,T}=\begin{cases}
\frac{a_{1,T}-d_{1}+M_{12}^{r}d_{2}+M_{12}^{s}\left(a_{2,T}-d_{2}+M_{21}^{r}d_{1}\right)}{1-M_{12}^{s}M_{21}^{s}} & ,\,\left(a_{1,T},a_{2,T}\right)\in\Rss\\
\frac{a_{1,T}-d_{1}+M_{12}^{r}d_{2}+M_{12}^{r}\left(a_{2,T}-d_{2}+M_{21}^{r}d_{1}\right)}{1-M_{12}^{r}M_{21}^{s}} & ,\,\left(a_{1,T},a_{2,T}\right)\in\Rsd\\
0 & ,\,\left(a_{1,T},a_{2,T}\right)\in\Rds\\
0 & ,\,\left(a_{1,T},a_{2,T}\right)\in\Rdd
\end{cases}\label{eq:S1}
\end{equation}

\begin{equation}
s_{2,T}=\begin{cases}
\frac{a_{2,T}-d_{2}+M_{21}^{r}d_{1}+M_{21}^{s}\left(a_{1,T}-d_{1}+M_{12}^{r}d_{2}\right)}{1-M_{12}^{s}M_{21}^{s}} & ,\,\left(a_{1,T},a_{2,T}\right)\in\Rss\\
0 & ,\,\left(a_{1,T},a_{2,T}\right)\in\Rsd\\
\frac{a_{2,T}-d_{2}+M_{21}^{r}d_{1}+M_{21}^{r}\left(a_{1,T}-d_{1}+M_{12}^{r}d_{2}\right)}{1-M_{12}^{s}M_{21}^{d}} & ,\,\left(a_{1,T},a_{2,T}\right)\in\Rds\\
0 & ,\,\left(a_{1,T},a_{2,T}\right)\in\Rdd
\end{cases}\label{eq:S2}
\end{equation}

\begin{equation}
r_{1,T}=\begin{cases}
d_{1} & ,\,\left(a_{1,T},a_{2,T}\right)\in\Rss\\
d_{1} & ,\,\left(a_{1,T},a_{2,T}\right)\in\Rsd\\
\frac{a_{1,T}+M_{12}^{r}d_{2}+M_{12}^{s}\left(a_{2,T}-d_{2}\right)}{1-M_{12}^{s}M_{21}^{r}} & ,\,\left(a_{1,T},a_{2,T}\right)\in\Rds\\
\frac{a_{1,T}+M_{12}^{r}a_{2,T}}{1-M_{12}^{r}M_{21}^{r}} & ,\,\left(a_{1,T},a_{2,T}\right)\in\Rdd
\end{cases}\label{eq:r1}
\end{equation}

\begin{equation}
r_{2,T}=\begin{cases}
d_{2} & ,\,\left(a_{1,T},a_{2,T}\right)\in\Rss\\
\frac{a_{2,T}+M_{21}^{r}d_{1}+M_{21}^{s}\left(a_{1,T}-d_{1}\right)}{1-M_{21}^{s}M_{12}^{r}} & ,\,\left(a_{1,T},a_{2,T}\right)\in\Rsd\\
d_{2} & ,\,\left(a_{1,T},a_{2,T}\right)\in\Rds\\
\frac{a_{2,T}+M_{21}^{r}a_{1,T}}{1-M_{12}^{r}M_{21}^{r}} & ,\,\left(a_{1,T},a_{2,T}\right)\in\Rdd
\end{cases}\label{r2}
\end{equation}

\subsection{Computing correlations}

Consider two assets with covariance matrix 
\[
\Sigma=\Matrix{cc}{\sigma_1^2 & \sigma_1 \sigma_2 \rho \\ \sigma_1 \sigma_2 \rho & \sigma_2^2} \;,
\]
i.e. with volatilities $\sigma_{1},\sigma_{2}$ and correlation $\rho$.

The Cholesky factor $\v{L}$ with $\Sigma=\v{L}\v{L}^{T}$ is then
given by 
\begin{equation}
    \v{L} = \Matrix{cc}{\sigma_1 & 0 \\ \sigma_2 \rho & \sigma_2 \sqrt{1
      - \rho^2}} = \Matrix{cc}{l_{11} & 0 \\ l_{21} & l_{22}} \; .
    \label{eq:cholesky}
\end{equation}
In particular, the correlation coefficient can be expressed directly
in terms of the Cholesky coefficients as 
\begin{equation}
\rho=\frac{l_{21}}{\sqrt{l_{21}^{2}+l_{22}^{2}}}=\frac{1}{\sqrt{1+\left(\frac{l_{22}}{l_{21}}\right)^{2}}}\;.\label{eq:rho_2}
\end{equation}

Furthermore, for the factor of the equity covariances we obtain 
\[
L^{s}=\mathrm{diag}(\v{s})^{-1}\frac{\partial\v{s}}{\partial\v{a}}\mathrm{diag}(\v{a})\v{L}
\]
or explicitly 
\begin{equation}
l_{ij}^{s}=\sum_{k}\frac{\Delta_{ik}a_{k}}{s_{i}}l_{kj}\label{eq:Lexplizit}
\end{equation}
where $\Delta_{ij}=\frac{\partial s_{i}}{\partial a_{j}}$. Note that
this is not a Cholesky factor, as it will not be lower triangular
in general. Nevertheless, we have $\Sigma^{s}=\v{L}^{s}(\v{L}^{s})^{T}$
or explicitly 
\begin{align*}
\sigma_{ij}^{s} & =\sum_{k}l_{ik}^{s}l_{jk}^{s}
\end{align*}
and therefore for the correlation coefficient 
\begin{align*}
\rho^{s} & =\frac{\sigma_{12}^{s}}{\sqrt{\sigma_{11}^{s}\sigma_{22}^{s}}}\\
 & =\frac{l_{11}^{s}l_{21}^{s}+l_{12}^{s}l_{22}^{s}}{\sqrt{((l_{11}^{s})^{2}+(l_{12}^{s})^{2})((l_{21}^{s})^{2}+(l_{22}^{s})^{2})}}\\
 & =\mathrm{sign}(l_{11}^{s}l_{21}^{s}+l_{12}^{s}l_{22}^{s})\sqrt{\frac{(l_{11}^{s}l_{21}^{s})^{2}+2l_{11}^{s}l_{21}^{s}l_{12}^{s}l_{22}^{s}+(l_{12}^{s}l_{22}^{s})^{2}}{(l_{11}^{s}l_{21}^{s})^{2}+(l_{11}^{s}l_{22}^{s})^{2}+(l_{12}^{s}l_{21}^{s})^{2}+(l_{12}^{s}l_{22}^{s})^{2}}}\;.
\end{align*}
Assuming that $l_{11}^{s}l_{21}^{s}+l_{12}^{s}l_{22}^{s}$ is nonzero,
we can rewrite the above equation using a quadratic extension as 
\begin{align}
\rho^{s} & =\mathrm{sign}(l_{11}^{s}l_{21}^{s}+l_{12}^{s}l_{22}^{s})\sqrt{\frac{(l_{11}^{s}l_{21}^{s})^{2}+2l_{11}^{s}l_{21}^{s}l_{12}^{s}l_{22}^{s}+(l_{12}^{s}l_{22}^{s})^{2}}{(l_{11}^{s}l_{21}^{s})^{2}+2l_{11}^{s}l_{21}^{s}l_{12}^{s}l_{22}^{s}-2l_{11}^{s}l_{21}^{s}l_{12}^{s}l_{22}^{s}+(l_{11}^{s}l_{22}^{s})^{2}+(l_{12}^{s}l_{21}^{s})^{2}+(l_{12}^{s}l_{22}^{s})^{2}}}\nonumber \\
 & =\mathrm{sign}(l_{11}^{s}l_{21}^{s}+l_{12}^{s}l_{22}^{s})\sqrt{\frac{(l_{11}^{s}l_{21}^{s})^{2}+2l_{11}^{s}l_{21}^{s}l_{12}^{s}l_{22}^{s}+(l_{12}^{s}l_{22}^{s})^{2}}{(l_{11}^{s}l_{21}^{s})^{2}+2l_{11}^{s}l_{21}^{s}l_{12}^{s}l_{22}^{s}+(l_{11}^{s}l_{22}^{s}-l_{12}^{s}l_{21}^{s})^{2}+(l_{12}^{s}l_{22}^{s})^{2}}}\nonumber \\
 & =\mathrm{sign}(l_{11}^{s}l_{21}^{s}+l_{12}^{s}l_{22}^{s})\sqrt{\frac{1}{1+\left(\frac{l_{11}^{s}l_{22}^{s}-l_{12}^{s}l_{21}^{s}}{l_{11}^{s}l_{21}^{s}+l_{12}^{s}l_{22}^{s}}\right)^{2}}}\label{eq:rho_s}
\end{align}

To compute the correlation coefficient we start with \eq{Lexplizit}
and obtain 
\begin{alignat*}{2}
l_{11}^{s} & =\frac{1}{s_{1}}(\Delta_{11}a_{1}l_{11}+\Delta_{12}a_{2}l_{21}) &  & =\frac{1}{s_{1}}(\Delta_{11}a_{1}\sigma_{1}+\Delta_{12}a_{2}\sigma_{2}\rho)\\
l_{12}^{s} & =\frac{1}{s_{1}}(\Delta_{11}a_{1}l_{12}+\Delta_{12}a_{2}l_{22}) &  & =\frac{1}{s_{1}}\Delta_{12}a_{2}\sigma_{2}\sqrt{1-\rho^{2}}\\
l_{21}^{s} & =\frac{1}{s_{2}}(\Delta_{21}a_{1}l_{11}+\Delta_{22}a_{2}l_{21}) &  & =\frac{1}{s_{2}}(\Delta_{21}a_{1}\sigma_{1}+\Delta_{22}a_{2}\sigma_{2}\rho)\\
l_{22}^{s} & =\frac{1}{s_{2}}(\Delta_{21}a_{1}l_{12}+\Delta_{22}a_{2}l_{22}) &  & =\frac{1}{s_{2}}\Delta_{22}a_{2}\sigma_{2}\sqrt{1-\rho^{2}}\\
\end{alignat*}
where the simplified expressions follow from $l_{12}=0$.

Overall, we obtain for the relevant terms 
\begin{align}
l_{11}^{s}l_{22}^{s} & =\frac{1}{s_{1}s_{2}}(\Delta_{11}a_{1}\sigma_{1}+\Delta_{12}a_{2}\sigma_{2}\rho)\Delta_{22}a_{2}\sigma_{2}\sqrt{1-\rho^{2}}\nonumber \\
 & =\frac{1}{s_{1}s_{2}}(\Delta_{11}a_{1}\Delta_{22}a_{2}\sigma_{1}\sigma_{2}\sqrt{1-\rho^{2}}+\Delta_{12}a_{2}\Delta_{22}a_{2}\sigma_{2}^{2}\rho\sqrt{1-\rho^{2}})\nonumber \\
l_{12}^{s}l_{21}^{s} & =\frac{1}{s_{1}s_{2}}\Delta_{12}a_{2}\sigma_{2}\sqrt{1-\rho^{2}}(\Delta_{21}a_{1}\sigma_{1}+\Delta_{22}a_{2}\sigma_{2}\rho)\nonumber \\
 & =\frac{1}{s_{1}s_{2}}(\Delta_{12}a_{2}\Delta_{21}a_{1}\sigma_{1}\sigma_{2}\sqrt{1-\rho^{2}}+\Delta_{12}a_{2}\Delta_{22}a_{2}\sigma_{2}^{2}\rho\sqrt{1-\rho^{2}})\nonumber \\
l_{11}^{s}l_{22}^{s}-l_{12}^{s}l_{21}^{s} & =\frac{1}{s_{1}s_{2}}(\Delta_{11}a_{1}\Delta_{22}a_{2}-\Delta_{12}a_{2}\Delta_{21}a_{1})\sigma_{1}\sigma_{2}\sqrt{1-\rho^{2}}\label{eq:lnumero}\\
l_{11}^{s}l_{21}^{s} & =\frac{1}{s_{1}s_{2}}(\Delta_{11}a_{1}\sigma_{1}+\Delta_{12}a_{2}\sigma_{2}\rho)(\Delta_{21}a_{1}\sigma_{1}+\Delta_{22}a_{2}\sigma_{2}\rho)\nonumber \\
 & =\frac{1}{s_{1}s_{2}}(\Delta_{11}a_{1}\Delta_{21}a_{1}\sigma_{1}^{2}+\Delta_{11}a_{1}\Delta_{22}a_{2}\sigma_{1}\sigma_{2}\rho+\Delta_{12}a_{2}\Delta_{21}a_{1}\sigma_{1}\sigma_{2}\rho+\Delta_{12}a_{2}\Delta_{22}a_{2}\sigma_{2}^{2}\rho^{2})\nonumber \\
l_{12}^{s}l_{22}^{s} & =\frac{1}{s_{1}s_{2}}\Delta_{12}a_{2}\Delta_{22}a_{2}\sigma_{2}^{2}(1-\rho^{2})\nonumber \\
l_{11}^{s}l_{21}^{s}+l_{12}^{s}l_{22}^{s} & =\frac{1}{s_{1}s_{2}}(\Delta_{11}a_{1}\Delta_{21}a_{1}\sigma_{1}^{2}+(\Delta_{11}a_{1}\Delta_{22}a_{2}+\Delta_{12}a_{2}\Delta_{21}a_{1})\sigma_{1}\sigma_{2}\rho+\Delta_{12}a_{2}\Delta_{22}a_{2}\sigma_{2}^{2})\label{eq:ldenom}
\end{align}
and therefore 
\begin{align}
\frac{1}{(\rho^{s})^{2}} & =1+\left(\frac{(\Delta_{11}a_{1}\Delta_{22}a_{2}-\Delta_{12}a_{2}\Delta_{21}a_{1})\sigma_{1}\sigma_{2}\sqrt{1-\rho^{2}}}{\Delta_{11}a_{1}\Delta_{21}a_{1}\sigma_{1}^{2}+(\Delta_{11}a_{1}\Delta_{22}a_{2}+\Delta_{12}a_{2}\Delta_{21}a_{1})\sigma_{1}\sigma_{2}\rho+\Delta_{12}a_{2}\Delta_{22}a_{2}\sigma_{2}^{2}}\right)^{2}\nonumber \\
 & =1+\left(\frac{(1-\frac{\Delta_{12}\Delta_{21}}{\Delta_{11}\Delta_{22}})\sqrt{1-\rho^{2}}}{\frac{\Delta_{21}a_{1}\sigma_{1}}{\Delta_{22}a_{2}\sigma_{2}}+(1+\frac{\Delta_{12}\Delta_{21}}{\Delta_{11}\Delta_{22}})\rho+\frac{\Delta_{12}a_{2}\sigma_{2}}{\Delta_{11}a_{1}\sigma_{1}}}\right)^{2}\label{eq:rho_S_Delta}
\end{align}
where we have used that $\Delta_{11},\Delta_{22},a_{1},a_{2},\sigma_{1},\sigma_{2}>0$.
As before, the sign of the correlation coefficient is given by the
sign of $l_{11}^{s}l_{21}^{s}+l_{12}^{s}l_{22}^{s}$.

Finally, the equity $\Delta$'s are given as (see appendix \ref{sec:Computing-the-Greeks}
for a detail computation): 
\begin{align}
  \label{eq:Delta11}
\Delta_{11} & =\Exq\left[\frac{1}{1-M_{12}^{s}M_{21}^{s}}\frac{A_{1,T}}{a_{1}}\Ind_{\Rss}+\frac{1}{1-M_{12}^{d}M_{21}^{s}}\frac{A_{1,T}}{a_{1}}\Ind_{\Rsd}\right]\\
& =\pss\frac{1}{1-M_{12}^{s}M_{21}^{s}}\Exq\left[\frac{A_{1,T}}{a_{1}}\bigm|\Rss\right]+\psd\frac{1}{1-M_{12}^{d}M_{21}^{s}}\Exq\left[\frac{A_{1,T}}{a_{1}}\bigm|\Rsd\right]\\
\label{eq:Delta12}
\Delta_{12} & =\Exq\left[\frac{1}{1-M_{12}^{s}M_{21}^{s}}M_{12}^{s}\frac{A_{2,T}}{a_{2}}\Ind_{\Rss}+\frac{1}{1-M_{12}^{d}M_{21}^{s}}M_{12}^{d}\frac{A_{2,T}}{a_{2}}\Ind_{\Rsd}\right]\\
& =\pss\frac{1}{1-M_{12}^{s}M_{21}^{s}}M_{12}^{s}\Exq\left[\frac{A_{2,T}}{a_{2}}\bigm|\Rss\right]+\psd\frac{1}{1-M_{12}^{d}M_{21}^{s}}M_{12}^{d}\Exq\left[\frac{A_{2,T}}{a_{2}}\bigm|\Rsd\right]\\
\label{eq:Delta21}
\Delta_{21} & =\Exq\left[\frac{1}{1-M_{12}^{s}M_{21}^{s}}M_{21}^{s}\frac{A_{1,T}}{a_{1}}\Ind_{\Rss}+\frac{1}{1-M_{12}^{s}M_{21}^{d}}M_{21}^{d}\frac{A_{1,T}}{a_{1}}\Ind_{\Rds}\right]\\
& =\pss\frac{1}{1-M_{12}^{s}M_{21}^{s}}M_{21}^{s}\Exq\left[\frac{A_{1,T}}{a_{1}}\bigm|\Rss\right]+\pds\frac{1}{1-M_{12}^{s}M_{21}^{d}}M_{21}^{d}\Exq\left[\frac{A_{1,T}}{a_{1}}\bigm|\Rds\right]\\
\label{eq:Delta22}
\Delta_{22} & =\Exq\left[\frac{1}{1-M_{12}^{s}M_{21}^{s}}\frac{A_{2,T}}{a_{2}}\Ind_{\Rss}+\frac{1}{1-M_{12}^{s}M_{21}^{d}}\frac{A_{2,T}}{a_{2}}\Ind_{\Rds}\right]\\
 & =\pss\frac{1}{1-M_{12}^{s}M_{21}^{s}}\Exq\left[\frac{A_{2,T}}{a_{2}}\bigm|\Rss\right]+\pds\frac{1}{1-M_{12}^{s}M_{21}^{d}}\Exq\left[\frac{A_{2,T}}{a_{2}}\bigm|\Rds\right]\;.
\end{align}

\subsection{Special cases and theorems}

\begin{theorem}\label{theo1} The equity correlation $\rho^{s}$ exceeds
the asset correlation $\rho$, i.e. $\rho^{s}\geq\rho$. \end{theorem}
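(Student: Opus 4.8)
The plan is to put both correlations into the common form $1/\mathrm{corr}^2 = 1 + (\cdot)^2$ and compare the bracketed terms. From \eqref{eq:rho_2}, with the Cholesky entries $l_{21}=\sigma_2\rho$ and $l_{22}=\sigma_2\sqrt{1-\rho^2}$, the asset correlation satisfies $1/\rho^2 = 1 + (1-\rho^2)/\rho^2$, whereas \eqref{eq:rho_S_Delta} reads $1/(\rho^s)^2 = 1 + (1-\rho^2)(1-q)^2/D^2$. Here I abbreviate $\alpha=\frac{\Delta_{21}a_1\sigma_1}{\Delta_{22}a_2\sigma_2}$, $\beta=\frac{\Delta_{12}a_2\sigma_2}{\Delta_{11}a_1\sigma_1}$, $q=\alpha\beta=\frac{\Delta_{12}\Delta_{21}}{\Delta_{11}\Delta_{22}}$, and $D=\alpha+\beta+(1+q)\rho$ is exactly the denominator appearing in \eqref{eq:rho_S_Delta}; by \eqref{eq:ldenom} the sign of $\rho^s$ equals the sign of $D$, since the prefactor $\Delta_{11}a_1\Delta_{22}a_2\sigma_1\sigma_2/(s_1s_2)$ is positive.

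First I would reduce everything to the sign of a single quantity. Because $1-\rho^2>0$ (the degenerate values $\rho=0,\pm1$ being checked by hand), comparing the two displays shows that $(\rho^s)^2\ge\rho^2$ holds if and only if
\[
F \;:=\; D^2-\rho^2(1-q)^2 \;\ge\; 0 .
\]
A direct expansion gives $F=(\alpha+\beta)^2+2(1+q)(\alpha+\beta)\rho+4q\rho^2$, and this factorises as $F=4q\bigl(\rho+\tfrac{\alpha+\beta}{2}\bigr)\bigl(\rho+\tfrac{\alpha+\beta}{2q}\bigr)$. The target $\rho^s\ge\rho$ then follows from a case split on the signs of $\rho$ and $D$: if $\rho<0$ and $D\ge0$ then $\rho^s\ge0>\rho$ trivially; if $\rho\ge0$ then $D\ge0$, so $\rho^s\ge0$ and it suffices to show $F\ge0$; and if $\rho<0$ with $D<0$ then $\rho^s<0$ and I must instead show $F\le0$, which yields $|\rho^s|\le|\rho|$ and hence $\rho^s\ge\rho$.

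The inputs I would record are $\Delta_{11},\Delta_{22}>0$ and $\Delta_{12},\Delta_{21}\ge0$, all immediate from \eqref{eq:Delta11}--\eqref{eq:Delta22} (nonnegative cross-holdings, positive conditional expectations, and denominators $1-M^{\cdot}_{12}M^{\cdot}_{21}>0$ by Assumption \ref{assu_1}); thus $\alpha,\beta\ge0$ and $q\ge0$. With $q\ge0$ the case $\rho\ge0$ is immediate, since every term of $F=(\alpha+\beta)^2+2(1+q)(\alpha+\beta)\rho+4q\rho^2$ is then nonnegative. For the remaining case $\rho<0,\,D<0$ the crucial extra ingredient is $q\le1$, i.e. $\det\v{\Delta}=\Delta_{11}\Delta_{22}-\Delta_{12}\Delta_{21}\ge0$. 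Granting it, AM--GM gives $\alpha+\beta\ge2\sqrt q\ge2q$ and also $1+q\le2$, so the two roots $-\tfrac{\alpha+\beta}{2}$ and $-\tfrac{\alpha+\beta}{2q}$ of $F$ obey $-\tfrac{\alpha+\beta}{2q}\le-1\le\rho$, while $D<0$ forces $\rho<-\tfrac{\alpha+\beta}{1+q}\le-\tfrac{\alpha+\beta}{2}$; hence $\rho$ lies strictly between the roots and $F\le0$, as needed.

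The main obstacle is therefore the determinant positivity $\det\v{\Delta}\ge0$ (equivalently $q\le1$), which is precisely what makes the negative-correlation case work. I would attack it by writing $\v{\Delta}=\Exq\!\left[\v{J}\,\diag(A_{1,T}/a_1,\,A_{2,T}/a_2)\right]$, where $\v{J}=\partial\v{s}_T/\partial\v{A}_T$ is the pathwise equity Jacobian. In each Suzuki area $\v{J}$ has nonnegative entries, and off $\Rss$ one of its rows vanishes (as $s_1\equiv0$ on $\Rds\cup\Rdd$ and $s_2\equiv0$ on $\Rsd\cup\Rdd$); expanding $\det\v{\Delta}$ by multilinearity of the determinant in the region-indicators reduces it to four $2\times2$ determinants, one per pair of regions. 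The $\Rss$-contribution is manifestly positive, its linear part being $(\v{I}-\v{M}^s)^{-1}$ with determinant $1/(1-M_{12}^sM_{21}^s)>0$. For the cross terms I would exploit that $\Exq[A_{i,T}/a_i]=e^{r\tau}$ is the \emph{same} for both firms, so each such determinant reduces to the sign of a difference of covariances $\mathrm{Cov}^Q(A_{1,T}/a_1,\Ind_{R})-\mathrm{Cov}^Q(A_{2,T}/a_2,\Ind_{R})$; since a region such as $\Rsd$ is increasing in $a_1$ and decreasing in $a_2$, these covariances carry the favourable signs. Turning this association heuristic into a rigorous inequality---controlling the cross-holding feedback entering the solvency boundaries, and the possible dependence on the sign of $\rho$---is the delicate point on which I would expect to spend most of the effort.
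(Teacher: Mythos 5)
Your skeleton coincides with the paper's own proof: the same reduction of both correlations to the form $1+(\cdot)^2$, the same sign identification via \eqref{eq:ldenom}, and the same three-way case split ($\rho\ge 0$; $\rho<0$ with $D\ge 0$; $\rho<0$ with $D<0$), with the positive case handled equivalently (your term-by-term $F\ge 0$ versus the paper's bound $(1-q)^2/(1+q)^2\le 1$ after dropping $\alpha,\beta\ge0$ from the denominator). The genuine divergence is in the hard case $\rho<0$, $D<0$, and there your route is not only different but sounder than the paper's. The paper reduces to $y^2+2y(1+x)\rho+\rho^2\left(1-(1-x)^2\right)\le 0$ --- already a mis-expansion of $\left(y+(1+x)\rho\right)^2\le(1-x)^2\rho^2$, whose constant term should be $4x\rho^2$ --- and then substitutes the bound $y<-\rho(1+x)$ into both the quadratic \emph{and} the linear term; since the linear coefficient $2(1+x)\rho$ is negative, the substitution runs the wrong way there, and the displayed majorization amounts to asserting $\left(y+(1+x)\rho\right)^2<0$. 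Indeed, treating $x\ge0$ and $y\ge0$ as free parameters subject only to $y<(1+x)|\rho|$, the target inequality is false (let $y\to0$ with $x>0$: $F\to 4x\rho^2>0$); what rescues the theorem is exactly the constraint $x=\alpha\beta\le(y/2)^2$ that your AM--GM step exploits and the paper never invokes. Your factorization $F=(y+2\rho)(y+2q\rho)$ together with AM--GM is therefore the correct repair of the published argument, not merely an alternative to it.

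Your one acknowledged gap --- $q\le 1$, i.e.\ $\det\v{\Delta}\ge 0$ --- is, fortunately, unnecessary, so the entire final paragraph (the covariance/association program for the Jacobian determinant, which you rightly suspect would be delicate) can be deleted rather than completed. If $q>1$ the two roots of $F$ in $\rho$ swap order, $-\tfrac{y}{2}<-\tfrac{y}{2q}$, and the same two ingredients place $\rho$ between them: AM--GM gives $y\ge 2\sqrt{q}>2\ge 2|\rho|$, hence $\rho\ge -1>-\tfrac{y}{2}$, while $D<0$ together with $1+q\le 2q$ gives $\rho<-\tfrac{y}{1+q}\le-\tfrac{y}{2q}$. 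Hence $F\le 0$ for every $q\ge 0$, and your argument, so completed (with the $q=0$ case read off directly from $F=y(y+2\rho)$), proves the theorem in full generality --- unlike the paper's own negative-correlation case, which as written does not.
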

\begin{proof} Here, we consider several cases. 
\begin{description}
\item [{$\rho = 0:$}] Then, by \eq{ldenom} 
\begin{align*}
l_{11}^{s}l_{21}^{s}+l_{12}^{s}l_{22}^{s} & =\frac{1}{s_{1}s_{2}}(\Delta_{11}a_{1}\Delta_{21}a_{1}\sigma_{1}^{2}+\Delta_{12}a_{2}\Delta_{22}a_{2}\sigma_{2}^{2})\geq0
\end{align*}
and therefore $\rho^{s}\geq0$ as well. 
\item [{$\rho > 0$:}] Then, by \eq{ldenom} we find that $l_{11}^{s}l_{21}^{s}+l_{12}^{s}l_{22}^{s}>0$
as well and therefore from \ref{eq:rho_s} $\rho^{s}>0$. Furthermore,
we compute 
\begin{alignat*}{2}
 &  & \rho^{s} & \geq\rho\\
\Leftrightarrow\quad &  & \frac{1}{(\rho^{s})^{2}} & \leq\frac{1}{\rho^{2}}\\
\Leftrightarrow\quad &  & 1+\left(\frac{x}{y}\right)^{2} & \leq\frac{1}{\rho^{2}}\\
\Leftrightarrow\quad &  & \left(\frac{x}{y}\right)^{2} & \leq\frac{1}{\rho^{2}}-1\\
\Leftrightarrow\quad &  & \left(\frac{x}{y}\right)^{2} & \leq\frac{1-\rho^{2}}{\rho^{2}}\\
\end{alignat*}
where $x=(1-\frac{\Delta_{12}\Delta_{21}}{\Delta_{11}\Delta_{22}})\sqrt{1-\rho^{2}}$
and $y=\frac{\Delta_{21}a_{1}\sigma_{1}}{\Delta_{22}a_{2}\sigma_{2}}+(1+\frac{\Delta_{12}\Delta_{21}}{\Delta_{11}\Delta_{22}})\rho+\frac{\Delta_{12}a_{2}\sigma_{2}}{\Delta_{11}a_{1}\sigma_{1}}$.
Continuing, we reason 
\begin{align*}
\frac{x^{2}}{y^{2}} & =\frac{(1-\frac{\Delta_{12}\Delta_{21}}{\Delta_{11}\Delta_{22}})^{2}(1-\rho^{2})}{\left(\frac{\Delta_{21}a_{1}\sigma_{1}}{\Delta_{22}a_{2}\sigma_{2}}+(1+\frac{\Delta_{12}\Delta_{21}}{\Delta_{11}\Delta_{22}})\rho+\frac{\Delta_{12}a_{2}\sigma_{2}}{\Delta_{11}a_{1}\sigma_{1}}\right)^{2}}\\
 & \leq\frac{(1-\frac{\Delta_{12}\Delta_{21}}{\Delta_{11}\Delta_{22}})^{2}}{(1+\frac{\Delta_{12}\Delta_{21}}{\Delta_{11}\Delta_{22}})^{2}}\cdot\frac{1-\rho^{2}}{\rho^{2}}\\
 & \leq\frac{1-\rho^{2}}{\rho^{2}}
\end{align*}
as required. 
\item [{$\rho < 0$:}] In this case, whenever $l_{11}^{s}l_{21}^{s}+l_{12}^{s}l_{22}^{s}\geq0$
we clearly have $\rho^{s}\geq0>\rho$. Thus, we assume that 
\begin{alignat*}{2}
 &  & l_{11}^{s}l_{21}^{s}+l_{12}^{s}l_{22}^{s} & <0\\
\Leftrightarrow\quad &  & \Delta_{11}a_{1}\Delta_{21}a_{1}\sigma_{1}^{2}+(\Delta_{11}a_{1}\Delta_{22}a_{2}+\Delta_{12}a_{2}\Delta_{21}a_{1})\sigma_{1}\sigma_{2}\rho+\Delta_{12}a_{2}\Delta_{22}a_{2}\sigma_{2}^{2} & <0\\
\Leftrightarrow\quad &  & \frac{\Delta_{11}a_{1}\Delta_{21}a_{1}\sigma_{1}^{2}+\Delta_{12}a_{2}\Delta_{22}a_{2}\sigma_{2}^{2}}{(\Delta_{11}a_{1}\Delta_{22}a_{2}+\Delta_{12}a_{2}\Delta_{21}a_{1})\sigma_{1}\sigma_{2}} & <-\rho\\
\Leftrightarrow\quad &  & \frac{\frac{\Delta_{21}a_{1}\sigma_{1}}{\Delta_{22}a_{2}\sigma_{2}}+\frac{\Delta_{12}a_{2}\sigma_{2}}{\Delta_{11}a_{1}\sigma_{1}}}{1+\frac{\Delta_{12}\Delta_{21}}{\Delta_{11}\Delta_{22}}} & <-\rho\;.
\end{alignat*}
Now, defining $x=\frac{\Delta_{12}\Delta_{21}}{\Delta_{11}\Delta_{22}}$
and $y=\frac{\Delta_{21}a_{1}\sigma_{1}}{\Delta_{22}a_{2}\sigma_{2}}+\frac{\Delta_{12}a_{2}\sigma_{2}}{\Delta_{11}a_{1}\sigma_{1}}$
the above equation reads 
\begin{align}
\frac{y}{1+x}<-\rho\quad\Leftrightarrow\quad y & <-\rho(1+x)\;.\label{eq:negrho}
\end{align}
The desired result $\rho^{s}\geq\rho$ than follows if we can show
that 
\begin{alignat*}{2}
 & \frac{1}{(\rho^{s})^{2}} & =1+\frac{(1-x)^{2}(1-\rho^{2})}{(y+(1+x)\rho)^{2}} & \geq\frac{1}{\rho^{2}}\\
\Leftrightarrow\quad &  & \frac{(1-x)^{2}(1-\rho^{2})}{\left(y+(1+x)\rho\right)^{2}} & \geq\frac{1-\rho^{2}}{\rho^{2}}\\
\Leftrightarrow\quad &  & \left(y+(1+x)\rho\right)^{2} & \leq(1-x)^{2}\rho^{2}\\
\Leftrightarrow\quad &  & y^{2}+2y(1+x)\rho+\rho^{2}\left(1-(1-x)^{2}\right) & \leq0\:.
\end{alignat*}
From \eq{negrho} we have that 
\begin{align*}
y^{2}+2y(1+x)\rho+\rho^{2}\left(1-(1-x)^{2}\right) & <\rho^{2}(1+x)^{2}-2(1+x)\rho(1+x)\rho+\rho^{2}\left(1-(1-x)^{2}\right)\\
 & =\rho^{2}\left(1-(1-x)^{2}-(1+x)^{2}\right)\\
 & =-\rho^{2}(1+2x^{2})
\end{align*}
which is obviously negative and thereby completes the proof. 
\end{description}
\end{proof}
\begin{description}
\item [{Higher}] values: Consider the limit $a_{1},a_{2}\to\infty$. Then, 
\end{description}
\begin{align}
\frac{\Delta_{12}}{\Delta_{11}} & =\frac{\pss\frac{1}{1-M_{12}^{s}M_{21}^{s}}M_{12}^{s}\Exq\left[\frac{A_{2,T}}{a_{2}}\bigm|\Rss\right]+\psd\frac{1}{1-M_{12}^{d}M_{21}^{s}}M_{12}^{d}\Exq\left[\frac{A_{2,T}}{a_{2}}\bigm|\Rsd\right]}{\pss\frac{1}{1-M_{12}^{s}M_{21}^{s}}\Exq\left[\frac{A_{1,T}}{a_{1}}\bigm|\Rss\right]+\psd\frac{1}{1-M_{12}^{d}M_{21}^{s}}\Exq\left[\frac{A_{1,T}}{a_{1}}\bigm|\Rsd\right]}\nonumber \\
 & =\frac{\frac{1}{1-M_{12}^{s}M_{21}^{s}}M_{12}^{s}\Exq\left[\frac{A_{2,T}}{a_{2}}\bigm|\Rss\right]+\frac{\psd}{\pss}\frac{1}{1-M_{12}^{d}M_{21}^{s}}M_{12}^{d}\Exq\left[\frac{A_{2,T}}{a_{2}}\bigm|\Rsd\right]}{\frac{1}{1-M_{12}^{s}M_{21}^{s}}\Exq\left[\frac{A_{1,T}}{a_{1}}\bigm|\Rss\right]+\frac{\psd}{\pss}\frac{1}{1-M_{12}^{d}M_{21}^{s}}\Exq\left[\frac{A_{1,T}}{a_{1}}\bigm|\Rsd\right]}\nonumber \\
 & \xrightarrow[a_{1},a_{2}\to\infty]{}M_{12}^{s}\frac{\Exq\left[\frac{A_{2,T}}{a_{2}}\bigm|\Rss\right]}{\Exq\left[\frac{A_{1,T}}{a_{1}}\bigm|\Rss\right]}\label{eq:DeltaRatio1}
\end{align}
as $\psd\to0,\pss\to1$ for $a_{1},a_{2}\to\infty$ and the conditional
expectations $\Exq\left[\frac{A_{1,T}}{a_{1}}\bigm|\Rss\right],\Exq\left[\frac{A_{2,T}}{a_{2}}\bigm|\Rss\right]$
are constants which do not depend on $a_{1},a_{2}$.

Similarly, 
\begin{align}
\frac{\Delta_{21}}{\Delta_{22}} & =\frac{\pss\frac{1}{1-M_{12}^{s}M_{21}^{s}}M_{21}^{s}\Exq\left[\frac{A_{1,T}}{a_{1}}\bigm|\Rss\right]+\pds\frac{1}{1-M_{12}^{s}M_{21}^{d}}M_{21}^{d}\Exq\left[\frac{A_{1,T}}{a_{1}}\bigm|\Rds\right]}{\pss\frac{1}{1-M_{12}^{s}M_{21}^{s}}\Exq\left[\frac{A_{2,T}}{a_{2}}\bigm|\Rss\right]+\pds\frac{1}{1-M_{12}^{s}M_{21}^{d}}\Exq\left[\frac{A_{2,T}}{a_{2}}\bigm|\Rds\right]}\nonumber \\
 & \xrightarrow[a_{1},a_{2}\to\infty]{}M_{21}^{s}\frac{\Exq\left[\frac{A_{1,T}}{a_{1}}\bigm|\Rss\right]}{\Exq\left[\frac{A_{2,T}}{a_{2}}\bigm|\Rss\right]}\;.\label{eq:DeltaRatio2}
\end{align}
Yet, in general, the limit of $\rho^{s}$ does not exist as it depends
explicitly on $\frac{a_{1}}{a_{2}}$ and thereby on the specific path
along which $a_{1},a_{2}$ are taken to infinity.

\subsubsection{Debt cross-holdings only}

Assuming debt cross-holdings only, i.e. $\v{M}^{s}\equiv\v{0}$, we
obtain the following proposition.

\begin{proposition} Assuming debt cross-holdings only, we have 
\[
\rho^{s}\xrightarrow[a_{1},a_{2}\to\infty]{}\rho\;.
\]
\end{proposition}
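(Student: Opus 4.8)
The plan is to substitute $\v{M}^{s}\equiv\v{0}$ into the Delta expressions \noeq{Delta11}--\noeq{Delta22}, read off the simplified Greeks, and then control the limiting behaviour of every quantity entering the closed form \noeq{rho_S_Delta}. Setting $M_{12}^{s}=M_{21}^{s}=0$ makes each factor $\frac{1}{1-M^{s}M^{s}}$ and $\frac{1}{1-M^{d}M^{s}}$ equal to one and annihilates every term carrying an equity-holding prefactor, so the diagonal Greeks become $\Delta_{11}=\Exq[\frac{A_{1,T}}{a_{1}}(\Ind_{\Rss}+\Ind_{\Rsd})]$ and $\Delta_{22}=\Exq[\frac{A_{2,T}}{a_{2}}(\Ind_{\Rss}+\Ind_{\Rds})]$, while the off-diagonal ones reduce to $\Delta_{12}=M_{12}^{d}\,\Exq[\frac{A_{2,T}}{a_{2}}\Ind_{\Rsd}]$ and $\Delta_{21}=M_{21}^{d}\,\Exq[\frac{A_{1,T}}{a_{1}}\Ind_{\Rds}]$. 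The goal is to show that the bracket in \noeq{rho_S_Delta} vanishes as $a_{1},a_{2}\to\infty$, which forces $1/(\rho^{s})^{2}\to1/\rho^{2}$.

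The key structural fact, available precisely because there are no equity cross-holdings, is that on a default event the firm's own external asset is bounded by its nominal debt. From \noeq{XOS_val_i} with $\v{M}^{s}=\v{0}$ we have $v_{1,T}=A_{1,T}+M_{12}^{d}r_{2,T}\ge A_{1,T}$, so firm-$1$ default $(v_{1,T}<d_{1})$ forces $A_{1,T}<d_{1}$, and symmetrically $A_{2,T}<d_{2}$ on firm-$2$ default. First I would use this to show that the \emph{scaled} off-diagonal Greeks vanish: since $\Rds\subseteq\{A_{1,T}<d_{1}\}$,
\[
a_{1}\Delta_{21}=M_{21}^{d}\,\Exq[A_{1,T}\Ind_{\Rds}]\le M_{21}^{d}\,d_{1}\,\Q(\Rds)\xrightarrow[a_{1},a_{2}\to\infty]{}0,
\]
because $\Q(\Rds)\le\Q(A_{1,T}<d_{1})\to0$ as the factor $A_{1,T}/a_{1}$ does not depend on $a_{1}$ while the threshold $d_{1}/a_{1}\to0$; the mirror argument gives $a_{2}\Delta_{12}\to0$. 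The same bound pins down the diagonal Greeks: writing $\Delta_{11}=\Exq[\frac{A_{1,T}}{a_{1}}]-\Exq[\frac{A_{1,T}}{a_{1}}\Ind_{\text{firm }1\text{ in default}}]$, the subtracted term is at most $\frac{d_{1}}{a_{1}}\to0$, while $\Exq[A_{1,T}/a_{1}]=e^{r\tau}$ under \noeq{SDE}, so $\Delta_{11}\to e^{r\tau}>0$ and likewise $\Delta_{22}\to e^{r\tau}$; only positivity of these limits will actually matter.

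With these limits I would evaluate the three ingredients of \noeq{rho_S_Delta}. The cross ratio $\frac{\Delta_{12}\Delta_{21}}{\Delta_{11}\Delta_{22}}\to0$, since both off-diagonals vanish and both diagonals stay bounded away from zero. Writing the denominator as $y=\frac{\Delta_{21}a_{1}\sigma_{1}}{\Delta_{22}a_{2}\sigma_{2}}+(1+\frac{\Delta_{12}\Delta_{21}}{\Delta_{11}\Delta_{22}})\rho+\frac{\Delta_{12}a_{2}\sigma_{2}}{\Delta_{11}a_{1}\sigma_{1}}$, I would recast its first term as $\frac{(a_{1}\Delta_{21})\sigma_{1}}{\Delta_{22}\,a_{2}\sigma_{2}}$, whose numerator tends to $0$ and whose denominator tends to $+\infty$, and treat the third term symmetrically; hence $y\to\rho$. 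Substituting into \noeq{rho_S_Delta} yields $1/(\rho^{s})^{2}\to1+\frac{1-\rho^{2}}{\rho^{2}}=1/\rho^{2}$, so $(\rho^{s})^{2}\to\rho^{2}$. Because the sign of $\rho^{s}$ equals that of $l_{11}^{s}l_{21}^{s}+l_{12}^{s}l_{22}^{s}$, i.e.~of $y\to\rho$, the correct root is selected and $\rho^{s}\to\rho$ for $\rho\neq0$; the case $\rho=0$ is handled separately by noting that then $y\to0^{+}$ while the numerator of the bracket tends to $1$, so $1/(\rho^{s})^{2}\to\infty$ and $\rho^{s}\to0=\rho$.

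The main obstacle I anticipate lies exactly in the off-diagonal terms $\frac{\Delta_{21}a_{1}\sigma_{1}}{\Delta_{22}a_{2}\sigma_{2}}$ and its mirror. As the ``Higher values'' discussion above shows, in general $\rho^{s}$ has no limit because such a term behaves like $\frac{\Delta_{21}}{\Delta_{22}}\cdot\frac{a_{1}}{a_{2}}$, and the path-dependent ratio $a_{1}/a_{2}$ can diverge. What rescues the debt-only case is that the default-threshold bound upgrades the mere termwise ratio $\frac{\Delta_{21}}{\Delta_{22}}\to0$ to the stronger uniform statement $a_{1}\Delta_{21}\to0$, which neutralises the dangerous factor $a_{1}/a_{2}$ along every path; securing this uniform bound, rather than the individual ratios, is the crux of the argument.
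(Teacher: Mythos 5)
Your proof is correct, and while it follows the same skeleton as the paper's proof --- set $\v{M}^{s}=\v{0}$ in \noeq{Delta11}--\noeq{Delta22}, pass to the limit inside the closed form \noeq{rho_S_Delta}, and fix the sign of $\rho^{s}$ via the sign of $l_{11}^{s}l_{21}^{s}+l_{12}^{s}l_{22}^{s}$ --- it differs at exactly the step you flag as the crux, and there it is genuinely stronger. The paper controls the dangerous terms $\frac{\Delta_{21}a_{1}\sigma_{1}}{\Delta_{22}a_{2}\sigma_{2}}$ and $\frac{\Delta_{12}a_{2}\sigma_{2}}{\Delta_{11}a_{1}\sigma_{1}}$ only through the ratio limits $\Delta_{21}/\Delta_{22}\to0$ and $\Delta_{12}/\Delta_{11}\to0$ inherited from \noeq{DeltaRatio1}--\noeq{DeltaRatio2}, and consequently must restrict --- and explicitly does restrict --- to paths along which $a_{1}/a_{2}$ and $a_{2}/a_{1}$ stay bounded. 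Your observation that in the debt-only case default of firm $i$ forces $A_{i,T}<d_{i}$ (since $v_{i,T}\geq A_{i,T}$ when $\v{M}^{s}=\v{0}$) upgrades those ratio statements to the uniform bounds $a_{1}\Delta_{21}\leq M_{21}^{d}\,d_{1}\,\Q(\Rds)\to0$ and $a_{2}\Delta_{12}\to0$, which neutralizes the factor $a_{1}/a_{2}$ along \emph{every} path and thus proves the proposition as literally stated, without the bounded-ratio caveat; this is a real improvement over the paper's argument. You also handle $\rho=0$ separately ($y\to0$ while the bracket's numerator tends to $1$, so $1/(\rho^{s})^{2}\to\infty$), a case the paper's computation silently excludes because its limit expression $\sqrt{1-\rho^{2}}/\rho$ divides by $\rho$. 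Two cosmetic remarks: the factor $\Q(\Rds)\to0$ is not even needed, since the crude bound $a_{1}\Delta_{21}\leq M_{21}^{d}d_{1}$ already suffices once the denominator $\Delta_{22}a_{2}\sigma_{2}\to\infty$; and $\Exq[A_{1,T}/a_{1}]=e^{r\tau}$ omits the discount factor if the $\Delta$'s are read as derivatives of time-$t$ prices, but as you correctly note, only positivity and boundedness of the diagonal limits enter the argument, so nothing breaks.
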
 \begin{proof} First we show that in the limit
$a_{1},a_{2}\to\infty$ we have $(\rho^{s})^{2}\xrightarrow[a_{1},a_{2}\to\infty]{}\rho^{2}$.
For this note that from the considerations above (\eq{DeltaRatio1}
and \eq{DeltaRatio2}), 
\begin{align*}
\frac{\Delta_{12}}{\Delta_{11}} & \xrightarrow[a_{1},a_{2}\to\infty]{}0\\
\frac{\Delta_{21}}{\Delta_{22}} & \xrightarrow[a_{1},a_{2}\to\infty]{}0\\
\end{align*}
and thus 
\begin{align*}
\frac{1}{(\rho^{s})^{2}} & =1+\left(\frac{(1-\frac{\Delta_{12}\Delta_{21}}{\Delta_{11}\Delta_{22}})\sqrt{1-\rho^{2}}}{\frac{\Delta_{21}a_{1}\sigma_{1}}{\Delta_{22}a_{2}\sigma_{2}}+(1+\frac{\Delta_{12}\Delta_{21}}{\Delta_{11}\Delta_{22}})\rho+\frac{\Delta_{12}a_{2}\sigma_{2}}{\Delta_{11}a_{1}\sigma_{1}}}\right)^{2}\\
 & \xrightarrow[a_{1},a_{2}\to\infty]{}1+\left(\frac{\sqrt{1-\rho^{2}}}{0+\rho+0}\right)^{2}\\
 & =\frac{\rho^{2}+1-\rho^{2}}{\rho^{2}}=\frac{1}{\rho^{2}}
\end{align*}
along any path at which $\frac{a_{1}}{a_{2}}$ and $\frac{a_{2}}{a_{1}}$
stay bounded. 

Furthermore, by the same argument $\frac{\Delta_{21}a_{1}\sigma_{1}}{\Delta_{22}a_{2}\sigma_{2}}+(1+\frac{\Delta_{12}\Delta_{21}}{\Delta_{11}\Delta_{22}})\rho+\frac{\Delta_{12}a_{2}\sigma_{2}}{\Delta_{11}a_{1}\sigma_{1}}\xrightarrow[a_{1},a_{2}\to\infty]{}\rho$
meaning that the signs of $\rho^{s}$ and $\rho$ eventually agree
and we conclude that $\rho^{s}\to\rho$. \end{proof}

\subsubsection{Merton model}

Sanity check of special cases 
\begin{itemize}
\item \label{No-network,}No network; i.e., $\v{M}^{s}=\v{M}^{d}\equiv\v{0}$.
Then, 
\begin{align*}
\frac{l_{22}^{s}}{l_{21}^{s}} & =\frac{\left(\pss\Exq[A_{2,T}|\Rss]+\pds\Exq[A_{2,T}|\Rds]\right)\sigma_{2}\sqrt{1-\rho^{2}}}{\left(\pss\Exq[A_{2,T}|\Rss]+\pds\Exq[A_{2,T}|\Rds]\right)\sigma_{2}\rho}=\frac{\sqrt{1-\rho^{2}}}{\rho}\\
\rho^{s} & =\frac{1}{\sqrt{1+\left(\frac{1-\rho^{2}}{\rho^{2}}\right)}}=\rho
\end{align*}
\item Volatility and the Merton model. According to \eq{cholesky} we have
$\sigma_{1}=l_{11}$ and therefore using \eq{Lexplizit} again 
\begin{align*}
\sigma_{1}^{s} & =l_{11}^{s}=\frac{\Delta_{11}a_{1}}{s_{1}}l_{11}+\frac{\Delta_{12}a_{2}}{s_{1}}l_{21}\\
 & =\frac{\left(\pss\frac{\Exq[A_{1,T}|\Rss]}{1-M_{12}^{s}M_{21}^{s}}+\psd\frac{\Exq[A_{1,T}|\Rsd]}{1-M_{12}^{d}M_{21}^{s}}\right)\sigma_{1}+\left(\pss\frac{M_{12}^{s}\Exq[A_{2,T}|\Rss]}{1-M_{12}^{s}M_{21}^{s}}+\psd\frac{M_{12}^{d}\Exq[A_{2,T}|\Rsd]}{1-M_{12}^{d}M_{21}^{s}}\right)\sigma_{2}\rho}{\pss\frac{\Exq[A_{1,T}|\Rss]-d_{1}+M_{12}^{d}d_{2}+M_{12}^{s}(\Exq[A_{2,T}|\Rss]-d_{2}+M_{21}^{d}d_{1})}{1-M_{12}^{s}M_{21}^{s}}+\psd\frac{\Exq[A_{1,T}|\Rsd]-d_{1}+M_{12}^{d}d_{2}+M_{12}^{d}(\Exq[A_{2,T}|\Rsd]-d_{2}+M_{21}^{d}d_{1})}{1-M_{12}^{d}M_{21}^{s}}}
\end{align*}
which should reduce to the standard Merton model formulas without
a network: 
\begin{align*}
\sigma_{1}^{s} & =\frac{\left(\pss\Exq[A_{1,T}|\Rss]+\psd\Exq[A_{1,T}|\Rsd]\right)\sigma_{1}}{\pss(\Exq[A_{1,T}|\Rss]-d_{1})+\psd(\Exq[A_{1,T}|\Rsd]-d_{1})}\\
 & =\frac{\pi_{s\cdot}\Exq[A_{1,T}|A_{s\cdot}]}{\pi_{s\cdot}\Exq[A_{1,T}|A_{s\cdot}]-\pi_{s\cdot}d_{1}}\sigma_{1}=\frac{\sigma_{1}}{1-\frac{d_{1}}{\Exq[A_{1,T}|A_{s\cdot}]}}\;.
\end{align*}
Translating to more standard notation, i.e. $A_{T}=A_{1,T}$ and $K=d_{1}$,
and using that the conditional expectation $\Exq[A_{T}|A_{T}\geq K]$
can be computed as 
\begin{align*}
\Exq[A_{T}|A_{T}\geq K] & =e^{\mu_{BS}+\frac{\sigma_{BS}^{2}}{2}}\frac{\Phi(\frac{\mu_{BS}+\sigma_{BS}^{2}-\ln K}{\sigma_{BS}})}{\Phi(\frac{\mu_{BS}-\ln K}{\sigma_{BS}})}\\
 & =a_{t}e^{r(T-t)}\frac{\Phi(d_{+})}{\Phi(d_{-})}
\end{align*}
\end{itemize}
where $\mu_{BS}=(r-\frac{\sigma^{2}}{2})(T-t)+\ln a_{t}$ and $\sigma_{BS}=\sigma\sqrt{T-t}$.
Further, $d_{\pm}$ denotes the familiar terms 
\[
d_{\pm}=\frac{\ln\frac{a_{t}}{K}+(r\pm\frac{\sigma^{2}}{2})(T-t)}{\sigma\sqrt{T-t}}
\]
and $\Phi$ the cumulative distribution function of a standard normal.
Thus, plugging everything together we obtain the standard result 
\begin{align*}
\sigma^{s} & =\frac{\sigma}{1-\frac{d_{1}}{\Exq[A_{1,T}|A_{s\cdot}]}}\\
 & =\frac{\sigma}{1-\frac{K\Phi(d_{-})}{a_{t}e^{r(T-t)}\Phi(d_{+})}}\\
 & =\frac{a_{t}\Phi(d_{+})}{a_{t}\Phi(d_{+})-e^{-r(T-t)}K\Phi(d_{-})}\sigma\\
 & =\Delta_{BS}\frac{a_{t}}{c_{BS}}\sigma
\end{align*}
with the Black-Scholes Delta $\Delta_{BS}=\Phi(d_{+})$ and call price
$c_{BS}=a_{t}\Phi(d_{+})-e^{-r(T-t)}K\Phi(d_{-})$.

\subsection{Numerical illustrations}

\subsubsection{Equity correlation as function of the network parameters}

One of the central results of this paper is given by the Theorem
\ref{theo1}; i.e., $\rho^{s}\geq\rho$ for two firms. Here, we
illustrate it numerically by plotting the equity correlation for
different values of asset correlations, initial prices, volatilities
and cross-holding fractions. For the sake of simplicity, we consider
cross-holdings of debt only ($\v{M}^{s}=\v{0}$) and symmetric initial
conditions for the assets\footnote{Thus, the both firms assets have
the same spot value, but are still log-normally distributed at
maturity. The case of comonotonic asset endowments as considered by
\cite{feinstein2021} corresponds to the trivial case of fully
correlated assets, i.e., $\rho = 1$, and thus $\rho_s = 1$ as well.},
i.e., $\sigma_{1}=\sigma_{2}=\sigma$ and
$a_{\text{1,0}}=a_{2,0}=a_{1,2}$.  \Fig{rhoS_1} show the resulting
equity correlation as a function of firm 1's equity\footnote{ Note
that by symmetry of the setup, the corresponding figure for firm 2's
equity is just the mirror image, i.e., obtained by exchanging
$M_{12}^{d}$ and $M_{21}^{d}$.}, for different debt cross-holding
fractions, asset correlations and volatilities.  Here, the subplots
correspond to cross-holding fractions of $M_{12}^{d} = 0, 0.2, \ldots,
0.8$ (vertical) and $M_{21}^{d} = 0, 0.2, \ldots, 0.8$
(horizontal). Furthermore, solid lines represent an asset volatility
value of $0.2$ while dotted lines a volatility value of 0.4. The
colors correspond to different values of asset correlations
$\rho_{a}=\left\{ -0.4,0,0.4,0.8\right\} $.  As proved above (page
\pageref{No-network,}), in case of no cross-holdings (left and upper
subplot) we find $\rho^{s}=\rho$. In contrast, with cross-holdings the
equity correlation shows a marked increase above the asset
correlations until the equity reaches essentially zero. Most notable,
even for anti-correlated business asset ($\rho_{a}=-0.4$) the firm's
equities exhibit positive correlations for sufficiently large
cross-holding fractions and stressed firm equities, i.e., during
crises times with correspondingly low asset values.  Similar effects
are also observed for asymmetric asset values and with additional
equity cross-holdings (as shown in appendix \ref{app:morefigs}).

\begin{figure}[ht]
  \centering \includegraphics[width=0.9\textwidth]{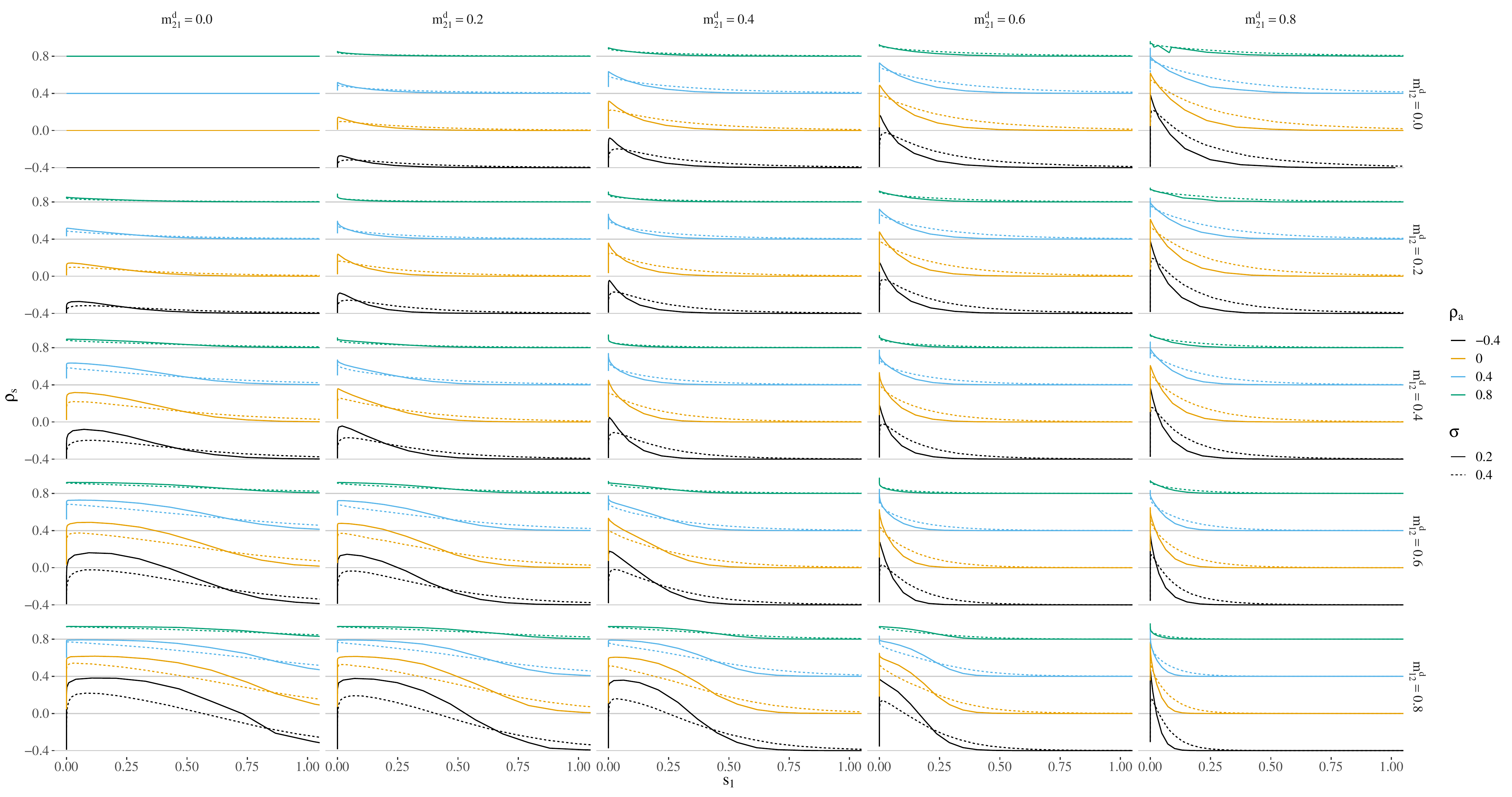} \caption{Equity correlations as function of firm 1's equity value $s_{1,0}$,
    for different debt cross-holding fractions, asset correlations and volatilities.
\label{fig:rhoS_1} }
\end{figure}

\section{Summary}

We have used a financial network with cross-holdings to model the
complex interlinkages around the financial firms and we put the accent
in the study of the correlation of their derivatives. In fact, we
uncover the capabilities of the \citeauthor{suzuki2002valuing} model
to address the non-constant behavior of the correlation observed in
the equity market; even from constant values of the business asset
correlations. We shown mathematically, that the correlation among
equities depends on the structure of the financial networks. Furthermore,
we demonstrate analytically for the two firms case, the equity correlation
is never lower than the unconditional correlation of the asset returns.
Besides, the numerical simulations shows the power of the network
approach too explain structurally the increase in correlation under
crisis.

\section*{Acknowledgement}

This work has been supported by DFG BE 7225/1--1.

\bibliographystyle{apacite}
\bibliography{Correlations}

\appendix

\section{Computing the Greeks\label{sec:Computing-the-Greeks}}

The Greeks quantify the sensitivities of derivative prices to changes
in underlying parameters. Here, we consider first-order Greeks only.
In particular, we compute the sensitivities of equity and debt prices
accounting for cross-holdings with respect to current asset values
$\Delta=\frac{\partial\v{x}_{t}}{\partial\v{a}_{t}}$.

First, we recall the solution $\v{A}_{t}$ of \eq{SDE} as 
\begin{align}
\v{A}_{t} & =\v{a}_{0}e^{\left(r-\frac{1}{2}\diag(\v{\sigma}^{2})\right)t+\diag(\v{\sigma})\v{W}_{t}}\label{eq:p_A_t-1}
\end{align}
where $\v{a}_{0}>0$ denotes the initial value and $\v{W}_{t}$ is
multivariate normal distributed with mean $\v{0}$ and covariance
matrix $t\v{C}$. Note that $\v{W}_{t}$ can be obtained from independent
standard normal variates $\v{Z}\sim\mathcal{N}(\v{0},\v{I}_{n\times n})$
as $\v{W}_{t}=\sqrt{t}\v{L}\v{Z}$ with $\v{L}^{T}\v{L}=\v{C}$. We
will use this representation in the next section to express the risk-neutral
market value of equity and debt contracts as 
\begin{align}
\v{x}_{t}=\E_{t}^{Q}[e^{-r\tau}\v{x}^{*}\left(\v{a}_{T}(\v{Z})\right)]=\E_{t}^{Q}\left[e^{-r\tau}\v{x}^{*}\left(\v{a}_{t}e^{\left(r-\frac{1}{2}\v{\sigma}^{2}\right)\tau+\sqrt{\tau}\diag(\v{\sigma})\v{L}\v{Z}}\right)\right]\,.\label{eq:val_Q-1}
\end{align}

\subsection{Formal solution}

Denoting all parameters of interest by $\v{\theta}=(\v{a}_{t},\v{\sigma},r,\tau)^{T}$
and considering that the asset value $\v{a}_{\tau}(Z;\theta)$ depends
on the random variate $Z$ and these parameters, we need to compute
the following derivatives 
\begin{align}
\frac{\partial}{\partial\v{\theta}}\v{x}_{t} & =\frac{\partial}{\partial\v{\theta}}\E_{t}^{Q}[e^{-r\tau}\v{x}^{*}(\v{a}_{T}(Z;\theta))] \nonumber \\
 & =\E_{t}^{Q}\left[\left(\frac{\partial}{\partial\v{\theta}}e^{-r\tau}\right)\v{x}^{*}(\v{a}_{T}(Z;\theta))+e^{-r\tau}\left(\frac{\partial}{\partial\v{\theta}}\v{x}^{*}(\v{a}_{T}(Z;\theta))\right)\right] \label{eq:greek_Q}
\end{align}
where we have used pathwise differentiation. Exchanging integration
and differentiation requires some continuity conditions on $\v{x}^{*}$.
In particular, \citeA[proposition 1]{broadie1996estimating} prove
that pathwise differentiation is applicable for Lipschitz continuous
functions.

\begin{lemma} The function $\v{x}^{*}(\v{a})$ is Lipschitz continuous
with Lipschitz constant 
\begin{align*}
L^{*}=(1-\max_{\v{\xi}}\norm{\v{K}_{\v{\xi}}})^{-1}
\end{align*}
where 
\begin{align*}
\v{K}_{\v{\xi}} & =\left(\begin{array}{cc}
\diag(\v{\xi})\v{M}^{s} & \diag(\v{\xi})\v{M}^{d}\\
\diag(\v{1}-\v{\xi})\v{M}^{s} & \diag(\v{1}-\v{\xi})\v{M}^{d}
\end{array}\right)\,.
\end{align*}
\end{lemma}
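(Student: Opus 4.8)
The plan is to realize $\v{x}^{*}$ as the fixed point of a map that is a uniform contraction in $\v{x}$ and Lipschitz in $\v{a}$, and then read off the constant $L^{*}$ from the two moduli. Writing $u_{i}(\v{a},\v{x})=a_{i}+\sum_{j}M_{ij}^{s}x_{j}+\sum_{j}M_{ij}^{d}x_{n+j}$ for the total value of firm $i$, the defining maps become $g_{i}^{s}=(u_{i}-d_{i})^{+}$ and $g_{i}^{r}=u_{i}-(u_{i}-d_{i})^{+}$, so that $g_{i}^{s}+g_{i}^{r}=u_{i}$: equity plus recovered debt always equals the firm value. This conservation identity is the heart of the argument.

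First I would establish the contraction in $\v{x}$ for fixed $\v{a}$. Since $z\mapsto(z-d_{i})^{+}$ is non-decreasing and $1$-Lipschitz, for any $\v{x},\v{x}'$ there is a slope $\xi_{i}\in[0,1]$ with $g_{i}^{s}(\v{a},\v{x})-g_{i}^{s}(\v{a},\v{x}')=\xi_{i}\,(u_{i}(\v{a},\v{x})-u_{i}(\v{a},\v{x}'))$. The identity $g_{i}^{r}=u_{i}-g_{i}^{s}$ then forces the complementary debt slope to equal exactly $1-\xi_{i}$, with the \emph{same} $\xi_{i}$. Since $u_{i}(\v{a},\v{x})-u_{i}(\v{a},\v{x}')$ is row $i$ of $(\v{M}^{s},\v{M}^{d})$ applied to $\v{x}-\v{x}'$, collecting the $n$ firms yields the exact representation $\v{g}(\v{a},\v{x})-\v{g}(\v{a},\v{x}')=\v{K}_{\v{\xi}}(\v{x}-\v{x}')$ with $\v{\xi}=(\xi_{1},\dots,\xi_{n})\in[0,1]^{n}$ and $\v{K}_{\v{\xi}}$ precisely the stated matrix.

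Next I would bound $\norm{\v{K}_{\v{\xi}}}$. As $\norm{\cdot}$ is the operator norm induced by the $\ell_{1}$ vector norm, it equals the largest absolute column sum. In column $j$ the equity- and debt-block entries add to $\sum_{i}\xi_{i}M_{ij}^{s}+\sum_{i}(1-\xi_{i})M_{ij}^{s}=\sum_{i}M_{ij}^{s}$, and likewise $\sum_{i}M_{ij}^{d}$ for the debt columns, so the column sums are independent of $\v{\xi}$. Hence $\norm{\v{K}_{\v{\xi}}}=\max(\norm{\v{M}^{s}},\norm{\v{M}^{d}})=\max_{\v{\xi}}\norm{\v{K}_{\v{\xi}}}=:\kappa$, which is strictly below $1$ by \eq{assu_1a} of assumption \ref{assu_1}; this makes $L^{*}=(1-\kappa)^{-1}$ finite and justifies the maximum in its definition. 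Thus $\v{g}(\v{a},\cdot)$ is a $\kappa$-contraction. Running the same slope argument in the $\v{a}$ slot, where the increment reduces to $a_{i}-a_{i}'$, gives $\v{g}(\v{a},\v{x})-\v{g}(\v{a}',\v{x})$ as a $2n\times n$ matrix whose columns sum to $\xi_{i}+(1-\xi_{i})=1$ applied to $\v{a}-\v{a}'$, so $\v{g}$ is $1$-Lipschitz in $\v{a}$.

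Finally I would combine the two bounds on the fixed-point identities $\v{x}^{*}(\v{a})=\v{g}(\v{a},\v{x}^{*}(\v{a}))$ and $\v{x}^{*}(\v{a}')=\v{g}(\v{a}',\v{x}^{*}(\v{a}'))$. Inserting the intermediate term $\v{g}(\v{a},\v{x}^{*}(\v{a}'))$ and applying the triangle inequality gives $\norm{\v{x}^{*}(\v{a})-\v{x}^{*}(\v{a}')}\leq\kappa\,\norm{\v{x}^{*}(\v{a})-\v{x}^{*}(\v{a}')}+\norm{\v{a}-\v{a}'}$, and rearranging yields $\norm{\v{x}^{*}(\v{a})-\v{x}^{*}(\v{a}')}\leq(1-\kappa)^{-1}\norm{\v{a}-\v{a}'}$, the claimed Lipschitz estimate. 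The one genuinely delicate point is the slope factorization: bounding $g_{i}^{s}$ and $g_{i}^{r}$ separately by their common $1$-Lipschitz modulus would double-count the asset increment and produce a constant too large by roughly a factor of two. The argument therefore hinges on recognising that the equity and debt sensitivities of each firm are complementary and sum to one, which is exactly what the conservation identity $g_{i}^{s}+g_{i}^{r}=u_{i}$ delivers.
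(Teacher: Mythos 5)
Your proposal is correct and takes essentially the same route as the paper's proof: Lipschitz estimates for $\v{g}$ in each argument (modulus $\max_{\v{\xi}}\norm{\v{K}_{\v{\xi}}}<1$ in $\v{x}$, modulus $1$ in $\v{a}$), followed by the identical insert-intermediate-term, triangle-inequality, and rearrangement step on the fixed-point identities. The only difference is that you actually derive these two moduli --- via the exact slope factorization $\v{g}(\v{a},\v{x})-\v{g}(\v{a},\v{x}')=\v{K}_{\v{\xi}}(\v{x}-\v{x}')$ with $\v{\xi}\in[0,1]^{n}$, the conservation identity $g_{i}^{s}+g_{i}^{r}=u_{i}$, and the observation that the column sums of $\v{K}_{\v{\xi}}$ are independent of $\v{\xi}$ --- whereas the paper asserts them without proof, so your write-up supplies exactly the details (in particular the complementary-slope point avoiding the factor-of-two) that the paper leaves implicit.
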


\begin{proof} First, we observe that $\v{g}$ is continuous and piecewise
linear. In particular, we have 
\begin{align*}
\norm{\v{g}(\v{a}_{1},\v{x})-\v{g}(\v{a}_{2},\v{x})} & \leq\norm{\v{a}_{1}-\v{a}_{2}}\\
\norm{\v{g}(\v{a},\v{x}_{1})-\v{g}(\v{a},\v{x}_{2})} & \leq\max_{\v{\xi}}\norm{\v{K}_{\v{\xi}}}\norm{\v{x}_{1}-\v{x}_{2}}
\end{align*}
showing that $\v{g}$ is Lipschitz continuous with respect to $\v{a}$
and $\v{x}$.

Note that by assumption $1$, we have $\norm{\v{M}^{s}},\norm{\v{M}^{d}}<1$.
Furthermore, the solvency indicator is either $\xi_{i}=1$ if bank
$i$ is solvent or $\xi_{i}=0$ otherwise. Thus, it holds that $\norm{\v{K}_{\v{\xi}}}<1,\forall\v{\xi}\in\{0,1\}^{N}$.

Then, we compute 
\begin{align*}
\norm{\v{x}^{*}(\v{a}_{1})-\v{x}^{*}(\v{a}_{2})} & =\norm{\v{g}(\v{a}_{1},\v{x}^{*}(\v{a}_{1}))-\v{g}(\v{a}_{2},\v{x}^{*}(\v{a}_{2}))}\\
 & =\norm{\v{g}(\v{a}_{1},\v{x}^{*}(\v{a}_{1}))-\v{g}(\v{a}_{1},\v{x}^{*}(\v{a}_{2}))+\v{g}(\v{a}_{1},\v{x}^{*}(\v{a}_{2}))-\v{g}(\v{a}_{2},\v{x}^{*}(\v{a}_{2}))}\\
 & \leq\norm{\v{g}(\v{a}_{1},\v{x}^{*}(\v{a}_{1}))-\v{g}(\v{a}_{1},\v{x}^{*}(\v{a}_{2}))}+\norm{\v{g}(\v{a}_{1},\v{x}^{*}(\v{a}_{2}))-\v{g}(\v{a}_{2},\v{x}^{*}(\v{a}_{2}))}\\
 & \leq\max_{\v{\xi}}\norm{\v{K}_{\v{\xi}}}\norm{\v{x}^{*}(\v{a}_{1}))-\v{x}^{*}(\v{a}_{2}))}+\norm{\v{a}_{1}-\v{a}_{2}}\\
\implies\norm{\v{x}^{*}(\v{a}_{1})-\v{x}^{*}(\v{a}_{2})} & \leq(1-\max_{\v{\xi}}\norm{\v{K}_{\v{\xi}}})^{-1}\norm{\v{a}_{1}-\v{a}_{2}}\;.
\end{align*}
\end{proof}

By the chain rule of differentiation we obtain 
\begin{align}
\frac{\partial}{\partial\v{\theta}}\v{x}^{*}(\v{a}_{\tau}(Z;\theta)) & =\frac{\partial}{\partial\v{a}}\v{x}^{*}(\v{a})\mid_{\v{a}=\v{a}_{\tau}(Z;\theta)}\frac{\partial}{\partial\v{\theta}}\v{a}_{\tau}(Z;\theta)\,.\label{eq:greek_chain}
\end{align}
Note that $\frac{\partial}{\partial\v{a}}\v{x}^{*}(\v{a})$ is the
derivative of the fixed point solving (\ref{eq:XOS_fix}). In order
to compute it, we make use of the {\em implicit function theorem}.
A version of the theorem by \citeA{halkin1974implicit} is adopted
to our notation: \begin{theorem} \label{thm:impl_fun} Let $U\subset\mathbb{R}^{m},V\subset\mathbb{R}^{n}$
and $\v{f}:U\times V\to\mathbb{R}^{n}$ a continuously differentiable
function. Suppose that 
\begin{align}
\v{f}(\v{x}^{*},\v{y}^{*})=\v{0}\label{eq:impl_fix}
\end{align}
at a point $(\v{x}^{*},\v{y}^{*})\in U\times V$ and that the Jacobian
matrices $\v{J}_{\v{f},\v{x}}\v{f}(\v{x},\v{y}),\v{J}_{\v{f},\v{y}}\v{f}(\v{x},\v{y})$
of partial derivatives exist at $(\v{x}^{*},\v{y}^{*})$. Further,
$\v{J}_{\v{f},\v{y}}$ is invertible at this point. Then, there exists
a neighborhood $U^{*}\subset U$ and a continuously differentiable
function $\v{h}:U^{*}\to\mathbb{R}^{n}$ with 
\begin{align}
\v{h}(\v{x}^{*}) & =\v{y}^{*}\label{eq:impl_sol}
\end{align}
and 
\begin{align}
\v{f}(\v{x},\v{h}(\v{x}))=\v{0}\quad\forall\v{x}\in U^{*}\,.\label{eq:impl_sol_fix}
\end{align}
Moreover, the partial derivatives of $\v{h}$ with respect to $\v{x}\in U^{*}$
are given as 
\begin{align}
\frac{\partial}{\partial\v{x}}\v{h}(\v{x}) & =-\left[\v{J}_{\v{f},\v{y}}\v{f}(\v{x},\v{h}(\v{x}))\right]_{n\times n}^{-1}\left[\frac{\partial}{\partial\v{x}}\v{f}(\v{x},\v{h}(\v{x}))\right]_{n\times m}\label{eq:impl_sol_deriv}
\end{align}
\end{theorem}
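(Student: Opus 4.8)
The plan is to prove the theorem by the Banach contraction mapping principle, which delivers local existence, uniqueness, and continuity of $\v{h}$ in one stroke, and then to upgrade continuity to continuous differentiability together with the formula \eqref{eq:impl_sol_deriv}. The stated hypotheses ($\v{f}\in C^1$ and $\v{J}_{\v{f},\v{y}}$ invertible at the base point, with domain and range of matching dimension $n$) are exactly those of the classical implicit function theorem, so a standard fixed-point argument applies. I would fix the invertible linear map $\v{B}=\v{J}_{\v{f},\v{y}}\v{f}(\v{x}^{*},\v{y}^{*})$ and, for each $\v{x}$ near $\v{x}^{*}$, introduce the map
\[
T_{\v{x}}(\v{y})=\v{y}-\v{B}^{-1}\v{f}(\v{x},\v{y}).
\]
Since $\v{B}^{-1}$ is invertible, $\v{y}$ solves $\v{f}(\v{x},\v{y})=\v{0}$ if and only if it is a fixed point of $T_{\v{x}}$. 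Because $\v{f}$ is $C^1$, the partial derivative $\partial T_{\v{x}}/\partial\v{y}=\v{I}-\v{B}^{-1}\v{J}_{\v{f},\v{y}}\v{f}(\v{x},\v{y})$ vanishes at $(\v{x}^{*},\v{y}^{*})$, so by continuity of the derivative it has norm at most $\tfrac12$ on a product of closed balls around $(\v{x}^{*},\v{y}^{*})$; hence each $T_{\v{x}}$ is a contraction in $\v{y}$ with constant $\tfrac12$.

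Next I would verify the self-mapping property. Shrinking the radius $\delta$ in $\v{x}$ if necessary and using $T_{\v{x}^{*}}(\v{y}^{*})=\v{y}^{*}$ together with continuity of $\v{x}\mapsto T_{\v{x}}(\v{y}^{*})$, one bounds $\|T_{\v{x}}(\v{y})-\v{y}^{*}\|\le\|T_{\v{x}}(\v{y})-T_{\v{x}}(\v{y}^{*})\|+\|T_{\v{x}}(\v{y}^{*})-\v{y}^{*}\|\le\tfrac12\varepsilon+\tfrac12\varepsilon=\varepsilon$, so $T_{\v{x}}$ maps the closed $\varepsilon$-ball about $\v{y}^{*}$ into itself. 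The Banach fixed point theorem then yields, for each $\v{x}$ in a neighborhood $U^{*}=B(\v{x}^{*},\delta)$, a unique fixed point $\v{h}(\v{x})$, which gives \eqref{eq:impl_sol} and \eqref{eq:impl_sol_fix}. Lipschitz continuity of $\v{h}$ follows from the usual uniform-contraction estimate $\|\v{h}(\v{x}_{1})-\v{h}(\v{x}_{2})\|\le(1-\tfrac12)^{-1}\|T_{\v{x}_{1}}(\v{h}(\v{x}_{2}))-T_{\v{x}_{2}}(\v{h}(\v{x}_{2}))\|$, whose right-hand side equals $2\|\v{B}^{-1}(\v{f}(\v{x}_{1},\v{h}(\v{x}_{2}))-\v{f}(\v{x}_{2},\v{h}(\v{x}_{2})))\|$ and is controlled by Lipschitz continuity of $\v{f}$ in its first argument.

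The main obstacle, and the step requiring the most care, is promoting continuity of $\v{h}$ to continuous differentiability and extracting the formula \eqref{eq:impl_sol_deriv}. I would first note that, after further shrinking $U^{*}$, the matrix $\v{J}_{\v{f},\v{y}}\v{f}(\v{x},\v{h}(\v{x}))$ stays invertible along the graph of $\v{h}$, since invertible matrices form an open set and $\v{x}\mapsto\v{J}_{\v{f},\v{y}}\v{f}(\v{x},\v{h}(\v{x}))$ is continuous. I would then take the candidate derivative $\v{D}(\v{x})=-[\v{J}_{\v{f},\v{y}}\v{f}(\v{x},\v{h}(\v{x}))]^{-1}\,\tfrac{\partial}{\partial\v{x}}\v{f}(\v{x},\v{h}(\v{x}))$ and verify, using the first-order Taylor expansion of $\v{f}$ at $(\v{x},\v{h}(\v{x}))$ together with the already-established Lipschitz bound on $\v{h}$, that $\v{h}(\v{x}+\v{u})-\v{h}(\v{x})-\v{D}(\v{x})\v{u}=o(\|\v{u}\|)$; continuity of $\v{D}$ then comes from continuity of $\v{h}$, of the partials of $\v{f}$, and of matrix inversion. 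An equivalent and arguably cleaner route is to apply the inverse function theorem to $\v{F}(\v{x},\v{y})=(\v{x},\v{f}(\v{x},\v{y}))$, whose Jacobian at $(\v{x}^{*},\v{y}^{*})$ is block-triangular with diagonal blocks $\v{I}$ and $\v{B}$ and hence invertible, and to read $\v{h}$ off the second component of $\v{F}^{-1}$; this defers all analytic work to the inverse function theorem, itself a contraction argument. In either route the formula is finally confirmed by differentiating the identity $\v{f}(\v{x},\v{h}(\v{x}))=\v{0}$ and solving the resulting linear relation $\v{J}_{\v{f},\v{x}}\v{f}+\v{J}_{\v{f},\v{y}}\v{f}\,\tfrac{\partial}{\partial\v{x}}\v{h}=\v{0}$ for $\tfrac{\partial}{\partial\v{x}}\v{h}$.
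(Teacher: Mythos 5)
Your proof is correct, but note that the paper offers no proof of this statement to compare against: Theorem \ref{thm:impl_fun} is quoted as a known result, a version of the implicit function theorem of \citeA{halkin1974implicit} adapted to the paper's notation, and is then used as a black box in Corollary \ref{coro:fix_diff}. Your contraction-mapping argument is the classical textbook proof and is sound in all three stages: the Newton-type map $T_{\v{x}}(\v{y})=\v{y}-\v{B}^{-1}\v{f}(\v{x},\v{y})$ with $\v{B}=\v{J}_{\v{f},\v{y}}\v{f}(\v{x}^{*},\v{y}^{*})$ is a uniform $\tfrac12$-contraction on a small ball precisely because $\v{f}\in C^{1}$ makes $\v{I}-\v{B}^{-1}\v{J}_{\v{f},\v{y}}\v{f}$ small near the base point; the self-mapping and Lipschitz estimates are standard; and the upgrade to continuous differentiability via the candidate derivative (or, equivalently, via the inverse function theorem applied to $(\v{x},\v{y})\mapsto(\v{x},\v{f}(\v{x},\v{y}))$, whose Jacobian is block-triangular with invertible diagonal blocks) is valid, with \eq{impl_sol_deriv} then following from the chain rule applied to the identity $\v{f}(\v{x},\v{h}(\v{x}))=\v{0}$. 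Two remarks on the comparison. First, your argument delivers slightly more than the statement asks for, namely local uniqueness of $\v{h}$ and an explicit Lipschitz bound, which is harmless and in fact useful elsewhere in the paper. Second, there is a hypothesis mismatch worth knowing about: Halkin's contribution is precisely to dispense with continuous differentiability --- his theorem assumes existence of the Jacobians only at the point $(\v{x}^{*},\v{y}^{*})$, which explains why the paper's statement redundantly lists both a $C^{1}$ hypothesis and the pointwise existence of $\v{J}_{\v{f},\v{x}},\v{J}_{\v{f},\v{y}}$; his proof rests on Brouwer's fixed point theorem rather than the Banach principle, since without continuity of $\v{J}_{\v{f},\v{y}}$ near the point no uniform contraction constant is available. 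Your Banach argument proves the theorem exactly as stated (i.e.\ under the $C^{1}$ hypothesis), and that version suffices for the paper's application, where $\v{f}(\v{a},\v{x})=\v{x}-\v{g}(\v{a},\v{x})$ is piecewise linear and hence smooth in the interior of each solvency region; it would not, however, prove Halkin's sharper statement under his weaker hypotheses.
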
 As the function $\v{g}(\v{a},\v{x})$ defined in \eq{XOS_g}
is Lipschitz continuous, it is almost everywhere differentiable. The
partial derivatives are given by 
\begin{align}
\frac{\partial}{\partial s_{j}}g_{i}^{s}(\v{a},\v{x}) & =\left\{ \begin{array}{rp{1cm}lcccc}
M_{ij}^{s} &  & \mbox{if firm \ensuremath{i} is solvent}\\
0 &  & \mbox{otherwise}
\end{array}\right.\label{eq:g_diff}\\
\frac{\partial}{\partial s_{j}}g_{i}^{r}(\v{a},\v{x}) & =\left\{ \begin{array}{rp{1cm}lcccc}
0 &  & \mbox{if firm \ensuremath{i} is solvent}\\
M_{ij}^{s} &  & \mbox{otherwise}
\end{array}\right.\\
\frac{\partial}{\partial r_{j}}g_{i}^{s}(\v{a},\v{x}) & =\left\{ \begin{array}{rp{1cm}lcccc}
M_{ij}^{d} &  & \mbox{if firm \ensuremath{i} is solvent}\\
0 &  & \mbox{otherwise}
\end{array}\right.\\
\frac{\partial}{\partial r_{j}}g_{i}^{r}(\v{a},\v{x}) & =\left\{ \begin{array}{rp{1cm}lcccc}
0 &  & \mbox{if firm \ensuremath{i} is solvent}\\
M_{ij}^{d} &  & \mbox{otherwise}
\end{array}\right.\\
\frac{\partial}{\partial a_{j}}g_{i}^{s}(\v{a},\v{x}) & =\left\{ \begin{array}{rp{1cm}lcccc}
\phantom{M}1 &  & \mbox{if \ensuremath{i=j} and firm \ensuremath{i} is solvent}\\
0 &  & \mbox{otherwise}
\end{array}\right.\\
\frac{\partial}{\partial a_{j}}g_{i}^{r}(\v{a},\v{x}) & =\left\{ \begin{array}{rp{1cm}lcccc}
\phantom{M}0 &  & \mbox{if \ensuremath{i=j} and firm \ensuremath{i} is solvent}\\
1 &  & \mbox{otherwise}
\end{array}\right.\,.
\end{align}
Here, a firm $i$ is solvent if its asset value $v_{i}$ is sufficient
to repay its nominal debt $d_{i}$, i.e. $v_{i}=a_{i}+\sum_{j=1}^{n}M_{ij}^{s}s_{j}+\sum_{j=1}^{n}M_{ij}^{d}r_{j}>d_{i}$.
The derivatives of $\v{g}$ exist everywhere except for the boundary
case $v_{i}=d_{i}$. Defining the solvency vector $\v{\xi}=(\mathbbm{1}_{v_{i}>d_{1}}(v_{1}),\ldots,\mathbbm{1}_{v_{n}>d_{n}}(v_{n}))$,
the partial derivatives of $\v{g}$ with respect to $\v{x}$ can be
collected in a matrix as follows 
\begin{align}
\frac{\partial}{\partial\v{x}}\v{g}(\v{a},\v{x}) & =\left[\begin{array}{ccc}
\diag(\v{\xi})\v{M}^{s} &  & \diag(\v{\xi})\v{M}^{d}\\
\\
\diag(\v{1}_{n}-\v{\xi})\v{M}^{s} &  & \diag(\v{1}_{n}-\v{\xi})\v{M}^{d}
\end{array}\right]\label{eq:g_diff_mat}\\
 & =\diag\left((\v{\xi};\v{1}_{n}-\v{\xi})\right)\left[\begin{array}{ccc}
\v{M}^{s} &  & \v{M}^{d}\\
\\
\v{M}^{s} &  & \v{M}^{d}
\end{array}\right]
\end{align}

Thus, defining $\v{f}(\v{a},\v{x})=\v{x}-\v{g}(\v{a},\v{x})$ we obtain
by the implicit function theorem \ref{thm:impl_fun} \begin{corollary}
\label{coro:fix_diff} The partial derivatives of $\v{x}^{*}(\v{a})$
are given by 
\begin{align}
\frac{\partial}{\partial\v{a}}\v{x}^{*}(\v{a}) & =\left[\v{I}_{2n\times2n}-\frac{\partial}{\partial\v{x}}\v{g}(\v{a},\v{x})\right]^{-1}\left[\begin{array}{c}
\diag(\v{\xi})\\
\\
\diag(\v{1}_{n}-\v{\xi})
\end{array}\right]\label{eq:fix_diff}
\end{align}
\end{corollary}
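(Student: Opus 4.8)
The plan is to obtain the corollary as a direct application of the implicit function theorem (Theorem \ref{thm:impl_fun}) to the map $\v{f}(\v{a},\v{x})=\v{x}-\v{g}(\v{a},\v{x})$. Here the free variable is the external asset vector $\v{a}\in\mathbb{R}^{n}$ (playing the role of $\v{x}$ in the statement of Theorem \ref{thm:impl_fun}, so $m=n$) and the dependent variable is the stacked equity--debt vector $\v{x}\in\mathbb{R}^{2n}$ (playing the role of $\v{y}$, so the theorem's $n$ is our $2n$). By definition of the fixed point $\v{x}^{*}(\v{a})=\v{g}(\v{a},\v{x}^{*}(\v{a}))$, we have $\v{f}(\v{a},\v{x}^{*}(\v{a}))=\v{0}$, so hypothesis \noeq{impl_fix} holds at every $(\v{a},\v{x}^{*}(\v{a}))$.

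First I would address the regularity requirement of Theorem \ref{thm:impl_fun}, which demands a continuously differentiable $\v{f}$. Since $\v{g}$ is only piecewise linear, continuous differentiability fails exactly on the solvency boundaries $\bigcup_{i}\{v_{i}=d_{i}\}$. Away from this set the solvency vector $\v{\xi}$ is locally constant, so each $g_{i}^{s},g_{i}^{r}$ coincides locally with a single affine branch and $\v{f}$ is continuously differentiable on a neighborhood of $(\v{a},\v{x}^{*}(\v{a}))$. The implicit function theorem then applies on the interior of each solvency region, which is all that is needed since the boundary set is Lebesgue--null (the same fact underlying the pathwise differentiation in \noeq{greek_Q}).

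Next I would verify the invertibility hypothesis and assemble the formula. The Jacobian of $\v{f}$ in the dependent variable is $\v{J}_{\v{f},\v{x}}=\v{I}_{2n\times2n}-\frac{\partial}{\partial\v{x}}\v{g}(\v{a},\v{x})$, and by \noeq{g_diff_mat} the matrix $\frac{\partial}{\partial\v{x}}\v{g}$ is exactly the block matrix $\v{K}_{\v{\xi}}$ of the preceding lemma. Since $\norm{\v{K}_{\v{\xi}}}<1$ for every $\v{\xi}\in\{0,1\}^{n}$ (established in the proof of that lemma), the Neumann series shows $\v{I}_{2n\times2n}-\v{K}_{\v{\xi}}$ is invertible, so all hypotheses of Theorem \ref{thm:impl_fun} are met. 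The remaining ingredient is the Jacobian in the free variable, $\frac{\partial}{\partial\v{a}}\v{f}=-\frac{\partial}{\partial\v{a}}\v{g}$, which by the derivatives $\frac{\partial}{\partial a_{j}}g_{i}^{s},\frac{\partial}{\partial a_{j}}g_{i}^{r}$ computed above equals $-\left[\diag(\v{\xi});\diag(\v{1}_{n}-\v{\xi})\right]$. Substituting into \noeq{impl_sol_deriv},
\begin{align*}
\frac{\partial}{\partial\v{a}}\v{x}^{*}(\v{a})=-\left[\v{I}_{2n\times2n}-\frac{\partial}{\partial\v{x}}\v{g}\right]^{-1}\left(-\frac{\partial}{\partial\v{a}}\v{g}\right)=\left[\v{I}_{2n\times2n}-\frac{\partial}{\partial\v{x}}\v{g}\right]^{-1}\left[\begin{array}{c}\diag(\v{\xi})\\\diag(\v{1}_{n}-\v{\xi})\end{array}\right],
\end{align*}
the two sign changes cancelling to give the claimed identity. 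The main obstacle is the first step: Theorem \ref{thm:impl_fun} needs $C^{1}$ regularity, so the only genuine work is justifying that restricting to the open solvency cells (where $\v{\xi}$ is constant and $\v{g}$ is affine) loses nothing; the rest is a mechanical substitution of the two Jacobians already computed.
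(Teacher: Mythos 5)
Your proof is correct and follows essentially the same route as the paper's: it applies Theorem \ref{thm:impl_fun} to $\v{f}(\v{a},\v{x})=\v{x}-\v{g}(\v{a},\v{x})$, obtains invertibility of $\v{I}_{2n\times2n}-\frac{\partial}{\partial\v{x}}\v{g}$ from $\norm{\v{K}_{\v{\xi}}}<1$ under Assumption \ref{assu_1}, and lets the two sign changes in \noeq{impl_sol_deriv} cancel. Your explicit restriction to the open solvency cells, where $\v{\xi}$ is locally constant and $\v{g}$ is affine hence $C^{1}$, is a careful elaboration of a point the paper leaves implicit (it merely remarks that the derivatives exist almost everywhere, off the boundary set $v_{i}=d_{i}$), not a genuinely different argument.
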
 \begin{proof} Use that $\frac{\partial}{\partial\v{x}}\v{f}(\v{a},\v{x})=\v{I}_{2n\times2n}-\frac{\partial}{\partial\v{x}}\v{g}(\v{a},\v{x})$
and $\frac{\partial}{\partial\v{a}}\v{f}(\v{a},\v{x})=-\frac{\partial}{\partial\v{a}}\v{g}(\v{a},\v{x})$.
Then, the result follows from theorem \ref{thm:impl_fun} and $\frac{\partial}{\partial\v{a}}\v{g}(\v{a},\v{x})=\left[\begin{array}{c}
\diag(\v{\xi})\\
\diag(\v{1}_{n}-\v{\xi})
\end{array}\right]$. As explained below, assumption \ref{assu_1} ensures that $\frac{\partial}{\partial\v{x}}\v{f}(\v{a},\v{x})$
is invertible as required. \end{proof}

Finally, combining \eq{greek_Q} and (\ref{eq:greek_chain}) with
corollary \ref{coro:fix_diff} we formally compute the network Greeks
as 
\begin{align}
\frac{\partial}{\partial\v{\theta}}\v{x}_{t} & =\E_{t}^{Q}\left[\left(\frac{\partial}{\partial\v{\theta}}e^{-r\tau}\right)\v{x}^{*}(\v{a}_{T}(Z;\theta))\right.\nonumber \\
 & \phantom{=}\left.+e^{-r\tau}\left[\v{I}_{2n\times2n}-\frac{\partial}{\partial\v{x}}\v{g}(\v{a},\v{x})\right]^{-1}\left[\begin{array}{c}
\diag(\v{\xi})\\
\\
\diag(\v{1}_{n}-\v{\xi})
    \end{array}\right]\frac{\partial}{\partial\v{\theta}}\v{a}_{T}(Z;\theta)\right]
\label{eq:net_greeks_formal}
\end{align}
where the expectation is well-defined as the derivatives exist almost
everywhere, i.e. except for a set of measure zero.

\subsection{Two bank Delta}

In case of two banks, the network Delta, i.e.~$\v{\Delta} =
\frac{\partial \v{x}_t}{\partial \v{a}_t}$, can be computed
explicitly. First, we drop the time index $t$ to ease notation and
denote firm values (of equity and debt) and asset prices at time $t$
as $\v{x} = (s_1, s_2, r_1, r_2)$ and $\v{a} = (a_1, a_2)$
respectively. Then, using that
$\frac{\partial}{\partial\v{a}}e^{-r\tau} = \v{0}$ and
\begin{align*}
  \frac{\partial}{\partial\v{a}}\v{A}_{T} &=
  \frac{\partial}{\partial\v{a}}\v{a} \; e^{\left(r-\frac{1}{2}\diag(\v{\sigma}^{2})\right)(T - t)+\diag(\v{\sigma})\v{W}_{T}} \\
  &= \v{I} \; e^{\left(r-\frac{1}{2}\diag(\v{\sigma}^{2})\right)(T - t)+\diag(\v{\sigma})\v{W}_{T}}
  &&= \left( \begin{array}{cc} \frac{A_{1,T}}{a_1} & 0 \\ 0 & \frac{A_{2,T}}{a_2} \end{array} \right)
\end{align*}
from \eq{p_A_t} with suitably shifted time indices,
\eq{net_greeks_formal} simplifies to
\begin{align*}
  \frac{\partial}{\partial\v{a}}\v{x}_{t} & =\E_{t}^{Q}\left[
    e^{-r\tau}\left[\v{I}_{4\times4}-\frac{\partial}{\partial\v{x}}\v{g}(\v{a},\v{x})\right]^{-1}\left[\begin{array}{c}
\diag(\v{\xi})\\
\\
\diag(\v{1}-\v{\xi})
      \end{array}\right]
    \left( \begin{array}{cc} \frac{A_{1,T}}{a_1} & 0 \\ 0 & \frac{A_{2,T}}{a_2} \end{array} \right) \right]
\end{align*}
Furthermore, we find from \eq{g_diff_mat} that
\begin{align*}
  \v{I}_{4\times4}-\frac{\partial}{\partial\v{x}}\v{g}(\v{a},\v{x})
  &= \left( 
  \begin{array}{cccc}
    1 & - \xi_1 M^s_{12} & 0 & - \xi_1 M^d_{12} \\
    - \xi_2 M^s_{21} & 1 & - \xi_2 M^d_{21} & 0 \\
    0 & - (1-\xi_1) M^s_{12} & 1 & - (1-\xi_1) M^d_{12} \\
    - (1-\xi_2) M^s_{21} & 0 & - (1-\xi_2) M^d_{21} & 1
  \end{array} \right)
\end{align*}
Thus, the matrix is piecewise constant on each solvency region and
$\v{\Delta}$ can be found in all four cases. For illustration, we
detail the case $\xi_1 = \xi_2 = 1$, i.e.~both banks solvent:
{\footnotesize
\begin{align*}
  \left[ \v{I}_{4\times4}-\frac{\partial}{\partial\v{x}}\v{g}(\v{a},\v{x}) \right]^{-1}
  \left[\begin{array}{c} \diag(\v{\xi}) \\ \\ \diag(\v{1}-\v{\xi}) \end{array}\right]
  \left( \begin{array}{cc} \frac{A_{1,T}}{a_1} & 0 \\ 0 & \frac{A_{2,T}}{a_2} \end{array} \right)
  &= \left( 
  \begin{array}{cccc}
    1 & - M^s_{12} & 0 & - M^d_{12} \\
    - M^s_{21} & 1 & - M^d_{21} & 0 \\
    0 & 0 & 1 & 0 \\
    0 & 0 & 0 & 1
  \end{array} \right)^{-1}
  \left( \begin{array}{cc} 1 & 0 \\ 0 & 1 \\ 0 & 0 \\ 0 & 0 \end{array} \right)
  \left( \begin{array}{cc} \frac{A_{1,T}}{a_1} & 0 \\ 0 & \frac{A_{2,T}}{a_2} \end{array} \right) \\
  &= \left( 
  \begin{array}{cccc}
    \frac{1}{1 - M^s_{12} M^s_{22}} & \frac{M^s_{12}}{1 - M^s_{12} M^s_{22}} & 0 & 0 \\
    \frac{M^s_{21}}{1 - M^s_{12} M^s_{22}} & \frac{1}{1 - M^s_{12} M^s_{22}} & 0 & 0 \\
    0 & 0 & 1 & 0 \\
    0 & 0 & 0 & 1
  \end{array} \right)
  \left( \begin{array}{cc} 1 & 0 \\ 0 & 1 \\ 0 & 0 \\ 0 & 0 \end{array} \right)
  \left( \begin{array}{cc} \frac{A_{1,T}}{a_1} & 0 \\ 0 & \frac{A_{2,T}}{a_2} \end{array} \right) \\
  &= \left( 
  \begin{array}{cc}
    \frac{1}{1 - M^s_{12} M^s_{22}} \frac{A_{1,T}}{a_1} & \frac{M^s_{12}}{1 - M^s_{12} M^s_{22}} \frac{A_{2,T}}{a_2} \\
    \frac{M^s_{21}}{1 - M^s_{12} M^s_{22}} \frac{A_{1,T}}{a_1} & \frac{1}{1 - M^s_{12} M^s_{22}} \frac{A_{2,T}}{a_2} \\
    0 & 0 \\
    0 & 0 
  \end{array} \right)
\end{align*}}
containing the $\Delta$'s for equity (top) and debt (bottom)
respectively. Note that the equity $\Delta$'s correspond to the terms
for solvency region $\Rss$ in equations (\ref{eq:Delta11}) --
(\ref{eq:Delta22}).

Here, we have used that the inverse of a block matrix can be expressed
via the Schur complement as
\begin{align*}
  \left( \begin{array}{cc} \v{B}_{11} & \v{B}_{12} \\ \v{B}_{21} & \v{B}_{22} \end{array} \right)^{-1}
  &= \left( \begin{array}{cc} \v{I} & \v{0} \\ - \v{B}_{22}^{-1} \v{B}_{21} & \v{I} \end{array} \right)
  \left( \begin{array}{cc} ( \v{B}_{11} - \v{B}_{12} \v{B}_{22}^{-1} \v{B}_{21} )^{-1} & \v{0} \\ \v{0} & \v{B}_{22}^{-1} \end{array} \right)
  \left( \begin{array}{cc} \v{I} & - \v{B}_{12} \v{B}_{22}^{-1} \\ \v{0} & \v{I} \end{array} \right) \; .
\end{align*}
As in our case, $\v{B}_{11} = \v{I} - \v{M}^s, \v{B}_{12} = \v{M}^d,
\v{B}_{21} = \v{0}$ and $\v{B}_{22} = \v{I}$, we obtain
\begin{align*}
  \left( \begin{array}{cc} \v{I} & \v{0} \\ \v{0} & \v{I} \end{array} \right)
  \left( \begin{array}{cc} (\v{I} - \v{M}^s)^{-1} & \v{0} \\ \v{0} & \v{I} \end{array} \right)
  \left( \begin{array}{cc} \v{I} & - \v{M}^d \v{I} \\ \v{0} & \v{I} \end{array} \right)
  &= 
  \left( \begin{array}{cc} (\v{I} - \v{M}^s)^{-1} & \v{0} \\ \v{0} & \v{I} \end{array} \right)
\end{align*}
and the above results follows from
\begin{align*}
  \left( \begin{array}{cc} 1 & - M^s_{12} \\ - M^s_{21} & 1 \end{array} \right)^{-1}
  &= \frac{1}{1 - M^s_{12} M^s_{22}} \left( \begin{array}{cc} 1 & M^s_{12} \\ M^s_{21} & 1 \end{array} \right) \; .
\end{align*}

The $\Delta$'s on the other three solvency regions can be found
analogously and are ommitted for brevity.

\section{Additional figures}
\label{app:morefigs}

Markedly rising equity correlations at sufficiently low asset values
are also observed with additional equity cross-holdings of 10\%
(\fig{rhoS_10}) and asymmetric asset values (\fig{rhoS_2}). As the
figure is no longer symmetric with respect to the firms equity values,
we show the initial values of $a_{1}$ and $a_{2}$ (in log-scale)
instead.  The Suzuki areas, i.e., default boundaries, are indicated by
grey lines and the equity correlations are color coded. Again, at
sufficiently large debt cross-holding fractions a marked increase in
equity correlations is observed, especially in the $\Rdd$ Suzuki area,
i.e., during crisis.

\begin{figure}[ht]
  \centering
  \includegraphics[width=0.9\textwidth]{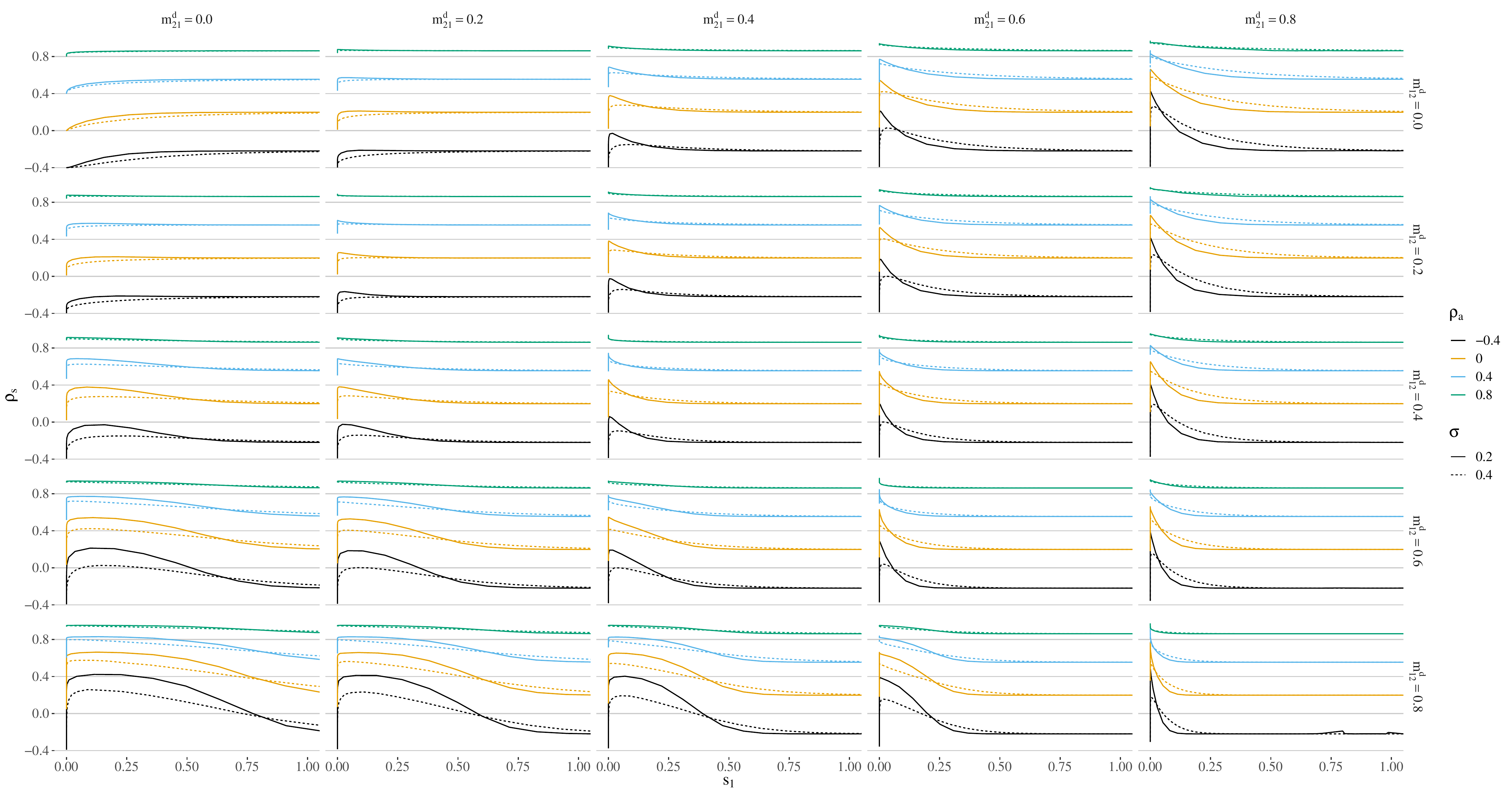}
  \caption{Same as \fig{rhoS_1}, but with additional equity
    cross-holdings of $M_{12}^{s} = M_{21}^{s} = 0.1$.
    \label{fig:rhoS_10} }
\end{figure}

\begin{figure}[ht]
\centering
\includegraphics[width=0.9\textwidth]{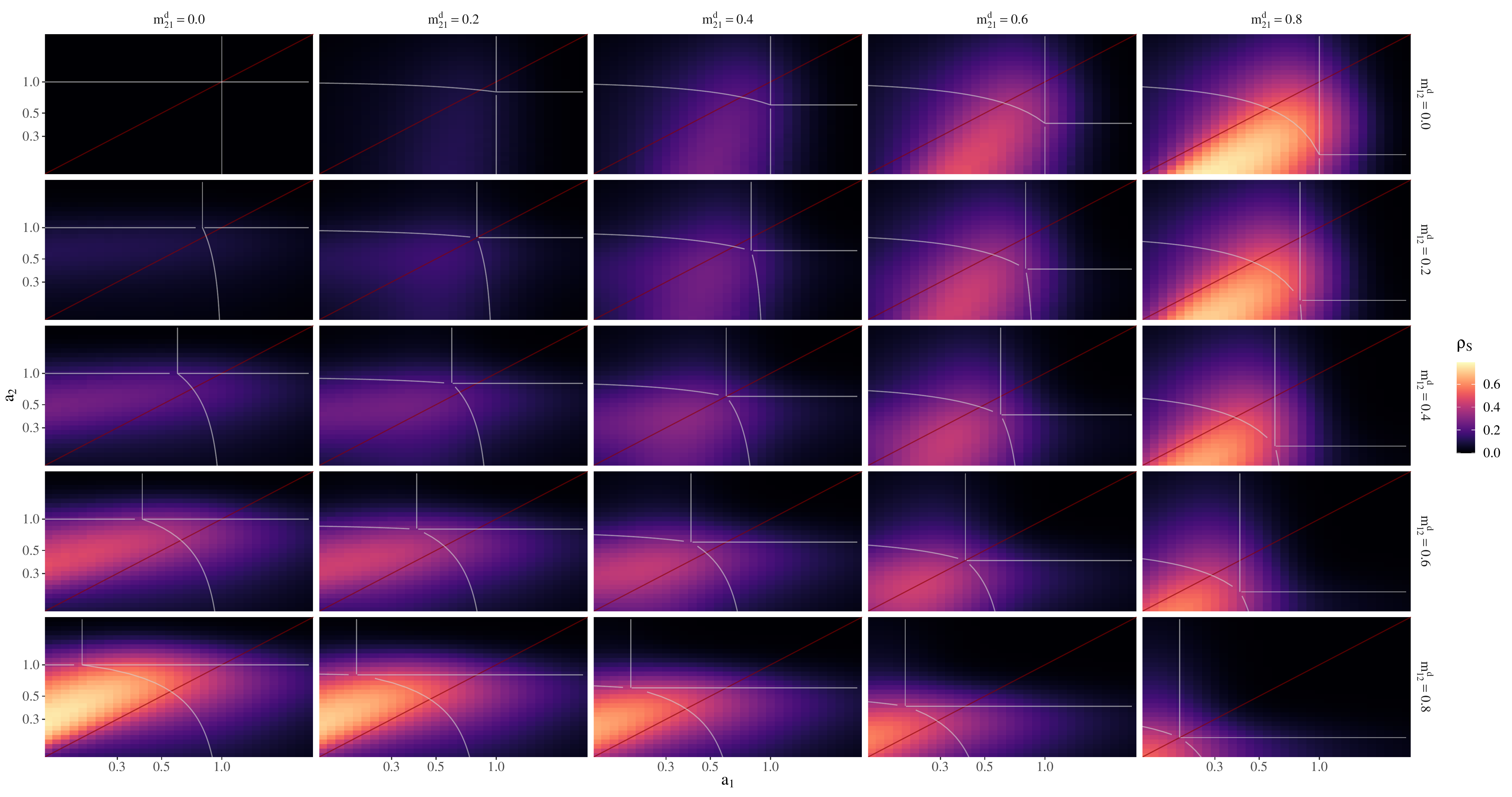} \caption{
  \label{fig:rhoS_2}
  Suzuki areas and equity correlations as a function of asset values
  $a_{1}$ and $a_{2}$. Here, the asset correlation is fixed at $\rho =
  0$.}
\end{figure}




\end{document}